\definecolor{USPNcobalt}{HTML}{293358}
\definecolor{USPNocre}{HTML}{8b7d6d}
\definecolor{USPNblanc}{HTML}{ffffff}
\definecolor{USPNceruleen}{HTML}{354878}
\definecolor{USPNsable}{HTML}{ad947e}
\crefname{line}{\text{line}}{\text{lines}} %
\crefname{lemC}{Lemma}{Lemmas}
\crefname{propertyC}{Property}{Properties}
\tikzstyle{PTA}=[auto, ->, >=stealth']
\tikzstyle{location}=[minimum size=12pt, circle, fill=blue!20, draw=black, text=black, inner sep=1.5pt, initial text={}] %
\tikzstyle{invariant}=[yshift=3, rectangle, draw=black, fill=white, text=black, inner sep=1pt]
\tikzstyle{decidable}=[green!80!black,-]
\tikzstyle{undecidable}=[red!80!black, dashed,-]
\tikzstyle{contribution}=[very thick]
\tikzstyle{legende}=[draw=none, font=\scriptsize]
\tikzstyle{unknown}=[black,dotted,thick]
\definecolor{coloract}{rgb}{0.50, 0.70, 0.30}
\definecolor{colorclock}{rgb}{0.4, 0.4, 1}
\definecolor{colordisc}{rgb}{1, 0, 1}
\definecolor{colorloc}{rgb}{0.4, 0.4, 0.65}
\definecolor{colorparam}{rgb}{.6, 0.2, 0.0}
\newcommand{\styleact}[1]{\ensuremath{\textcolor{coloract}{#1}}}
\newcommand{\styleclock}[1]{\ensuremath{\textcolor{colorclock}{#1}}}
\newcommand{\styleparam}[1]{\ensuremath{\textcolor{colorparam}{#1}}}
\newcommand{\init}{_0}
\newcommand{\A}{\ensuremath{\mathcal{A}}}
\newcommand{\Actions}{\Sigma}
\newcommand{\action}{a}
\newcommand{\assign}{\leftarrow}
\newcommand{\C}{\ensuremath{\mathbf{C}}}
\newcommand{\Constr}{\ensuremath{\mathsf{CS}}}
\newcommand{\Conv}{\ensuremath{\mathsf{Conv}}}
\newcommand{\Clock}{X} %
\newcommand{\ClockCard}{H} %
\newcommand{\clock}{x} %
\newcommand{\clocki}[1]{\styleclock{\clock_{#1}}}
\newcommand{\clockitext}[1]{\ensuremath{\clock_{#1}}}
\newcommand{\clockx}{\styleclock{\clock}} %
\newcommand{\clocky}{\styleclock{y}} %
\newcommand{\clockxtext}{\ensuremath{\clock}} %
\newcommand{\clockval}{w} %
\newcommand{\edge}{e}
\newcommand{\Edges}{E}
\newcommand{\fleche}[1]{\stackrel{#1}{\rightarrow}}
\newcommand{\TransConcrete}{\ensuremath{{\rightarrow}}}
\newcommand{\FlecheConcrete}[1]{\stackrel{#1}{\rightarrowtail}}
\newcommand{\TransConcreteEdge}{\ensuremath{{\rightarrowtail}}}
\newcommand{\FlecheSymbolic}[1]{\stackrel{#1}{\Rightarrow}}
\newcommand{\TransSymb}{\Rightarrow} %
\newcommand{\setN}{{\mathbb N}}
\newcommand{\setQ}{{\mathbb Q}}
\newcommand{\setQplus}{\setQ_{+}} %
\newcommand{\setR}{{\mathbb R}}
\newcommand{\setRplus}{\setR_{+}} %
\newcommand{\setZ}{{\mathbb Z}}
\newcommand{\subR}{{\mathbb D}}
\newcommand{\guard}{g}
\newcommand{\IH}{\ensuremath{\mathsf{IH}}}
\newcommand{\Init}{\mathbf{Init}}
\newcommand{\IV}{\ensuremath{\mathsf{IV}}}
\newcommand{\RIEF}{\ensuremath{\mathsf{RIEF}}}
\newcommand{\RIAF}{\ensuremath{\mathsf{RIAF}}}
\newcommand{\IEF}{\ensuremath{\mathsf{IEF}}}
\newcommand{\AF}{\ensuremath{\mathsf{AF}}}
\newcommand{\EF}{\ensuremath{\mathsf{EF}}}
\newcommand{\IAF}{\ensuremath{\mathsf{IAF}}}
\newcommand{\invariant}{I}
\newcommand{\K}{\styleSymbStatesSet{K}}
\newcommand{\KTrue}{\ensuremath{{\setQplus}^{\Param}}}
\newcommand{\KBlock}{\ensuremath{\K_\mathit{block}}}
\newcommand{\KGood}{\ensuremath{\K_\mathit{good}}}
\newcommand{\KLive}{\ensuremath{\K_\mathit{live}}}
\newcommand{\Kgood}{\ensuremath{\K_\mathit{good}}}
\newcommand{\LargestC}{{\ensuremath{\textcolor{colorok}{M}}}} %
\newcommand{\loc}{\ensuremath{\ell}} %
\newcommand{\locinit}{\loc\init}
\newcommand{\Loc}{L} %
\newcommand{\lterm}{\mathit{lt}}
\newcommand{\Param}{P} %
\newcommand{\param}{p} %
\newcommand{\ParamDomain}{\ensuremath{\mathit{PD}}} %
\newcommand{\paramp}{\styleparam{\param}}
\newcommand{\parama}{\styleparam{a}}
\newcommand{\paramb}{\styleparam{b}}
\newcommand{\parami}[1]{\styleparam{\param_{#1}}}
\newcommand{\paramq}{\ensuremath{\styleparam{q}}}
\newcommand{\paramptext}{\ensuremath{\param}}
\newcommand{\paramqtext}{\ensuremath{q}}
\newcommand{\paramatext}{\ensuremath{a}}
\newcommand{\parambtext}{\ensuremath{b}}
\newcommand{\paramctext}{\ensuremath{c}}
\newcommand{\paramitext}[1]{\ensuremath{\param_{#1}}}
\newcommand{\paramLambda}{\ensuremath{\lambda}}
\newcommand{\paramSigma}{\ensuremath{\sigma}}
\newcommand{\paramTS}{\ensuremath{\mathit{TS}}}
\newcommand{\paramcard}{M} %
\newcommand{\pval}{v} %
\newcommand{\plterm}{\mathit{plt}}
\newcommand{\Q}{Q}
\newcommand{\qinit}{q\init}
\newcommand{\resets}{R}
\newcommand{\LocsTarget}{\ensuremath{G}} %
\newcommand{\state}{s} %
\newcommand{\Succ}{\mathbf{Succ}}
\newcommand{\symtree}{\ensuremath{T^\infty}}
\newcommand{\timelapse}[1]{#1^\nearrow}
\newcommand{\timepast}[1]{#1^\swarrow}
\newcommand{\TPS}{\ensuremath{\mathsf{TP}}}
\newcommand{\RITPS}{\ensuremath{\mathsf{RITP}}}
\newcommand{\Traces}{\mathit{Traces}}
\newcommand{\treeprefix}{\ensuremath{T}}
\newcommand{\Variables}{\ensuremath{V}}
\newcommand{\variable}{\ensuremath{\nu}}
\newcommand{\varval}{\ensuremath{\eta}}
\newcommand{\varrun}{\ensuremath{\rho}} %
\newcommand{\wv}[2]{#1|#2} %
\newcommand{\valuations}[1]{\ensuremath{\llbracket{#1}\rrbracket}}
\newcommand{\projectP}[1]{\ensuremath{#1{\downarrow_{\Param}}}}
\newcommand{\reset}[2]{\ensuremath{[#1]_{#2}}}
\newcommand{\valuate}[2]{\ensuremath{#2(#1)}}
\newcommand{\cardinality}[1]{\ensuremath{\lvert #1 \rvert}}
\newcommand{\PZG}{\ensuremath{\styleSymbStatesSet{PSG}}} %
\newcommand{\styleSymbStatesSet}[1]{\ensuremath{\mathbf{#1}}}
\newcommand{\Passed}{\styleSymbStatesSet{Passed}}
\newcommand{\symbstate}{\ensuremath{\styleSymbStatesSet{s}}} %
\newcommand{\SymbStates}{\ensuremath{\styleSymbStatesSet{S}}} %
\newcommand{\symbstateinit}{\symbstate\init} %
\newcommand{\TimeOut}{\textbf{\textcolor{red}{T.O.}}}
\newcommand{\imitator}{\textsf{IMITATOR}}
\newcommand{\romeo}{\textsc{Roméo}}
\newcommand{\cylinder}[1]{\ensuremath{\textsf{Cyl}_{#1}}}
\newcommand{\Ext}[2]{\ensuremath{\textsf{Ext}^{#1}_{#2}}}
\theoremstyle{plain}
\newtheorem{lemma}[thm]{Lemma}
\newtheorem{proposition}[thm]{Proposition}
\newtheorem{theorem}[thm]{Theorem}
\theoremstyle{definition}
\newtheorem{definition}[thm]{Definition}
\newtheorem{example}[thm]{Example}
\theoremstyle{remark}
\newtheorem{remark}[thm]{Remark}
\theoremstyle{thmC}
\newtheorem{propertyC}[thm]{Property}
\newcommand{\defProblem}[3]
{%
	\noindent\fcolorbox{black}{USPNsable!20}{
	\begin{minipage}{.95\columnwidth}
		\textbf{#1:}\\
		\textsc{Input}: #2\\
		\textsc{Problem}: #3
	\end{minipage}
}

	\smallskip

}
	\definecolor{colorok}{RGB}{0,0,0}
\newcommand{\eg}{\textcolor{colorok}{\textit{e.g.,}}\xspace}
\newcommand{\ie}{\textcolor{colorok}{\textit{i.e.,}}\xspace}
\newcommand{\wrt}{\textcolor{colorok}{w.r.t.}\ }
\begin{document}
\sloppy

\title[Dense Integer-Complete Synthesis for Parametric Timed Automata]{Dense Integer-Complete Synthesis for Bounded Parametric Timed Automata}

\thanks{%
	This is an extended version of the paper by the same authors published in the proceedings of the 9th International Workshop on Reachability Problems ({RP} 2015)~\cite{ALR15}.
	This work has been supported by
		the ANR-NRF French-Singaporean research program ProMiS (ANR-19-CE25-0015 / 2019 ANR NRF 0092)
		and ANR BisoUS (ANR-22-CE48-0012).}
\author[É.~André]{Étienne André\lmcsorcid{0000-0001-8473-9555}}[a,b]
\author[D.~Lime]{Didier Lime\lmcsorcid{0000-0001-9429-7586}}[c]
\author[O.~H.~Roux]{Olivier~H.~Roux\lmcsorcid{0000-0003-1665-0481}}[c]

\address{Université Sorbonne Paris Nord, LIPN, CNRS UMR 7030, F-93430 Villetaneuse, France}	%
\address{Institut Universitaire de France (IUF)}

\address{Nantes Université, École Centrale Nantes, CNRS, LS2N, UMR 6004, F-44000 Nantes, France}	%

\begin{abstract}
	Ensuring the correctness of critical real-time systems, involving concurrent behaviours and timing requirements, is crucial.
	Timed automata extend finite-state automata with clocks, compared in guards and invariants with integer constants.
	Parametric timed automata (PTAs) extend timed automata with timing parameters.
	Parameter synthesis aims at computing dense sets of valuations for the timing parameters, guaranteeing a good behaviour.
	However, in most cases, the emptiness problem for reachability (\ie{} the emptiness of the parameter valuations set for which some location is reachable) is undecidable for~PTAs and, as a consequence, synthesis procedures do not terminate in general, even for bounded parameters.
	In this paper, we introduce a parametric extrapolation, that allows us to derive an underapproximation in the form of symbolic sets of valuations containing not only all the integer points ensuring reachability, but also all the (non-necessarily integer) convex combinations of these integer points, for general PTAs with a bounded parameter domain.
	We also propose two further algorithms synthesizing parameter valuations guaranteeing unavoidability, and preservation of the untimed behaviour \wrt{}a reference parameter valuation, respectively.
	Our algorithms terminate and can output sets of valuations arbitrarily close to the complete result.
	We demonstrate their applicability and efficiency using the tools \romeo{} and \imitator{} on several benchmarks.
\end{abstract}

\maketitle

\section{Introduction}
\label{section:intro}

The verification of software or hardware systems mixing time and concurrency is a notoriously difficult problem.
Timed automata (TAs)~\cite{AD94} are a powerful formalism for which many interesting problems (including the reachability of a location) are decidable.
However, the classical definition of TAs is not tailored to verify systems only partially specified, especially when the value of some timing constants is not yet known.
Parametric timed automata (PTAs)~\cite{AHV93} overcome this problem by allowing the specification and the verification of systems where some of the timing constants are parametric.
This expressive power comes at the price of the undecidability of most interesting problems (see~\cite{Andre19STTT} for a survey).

Synthesis for PTAs aims at deriving set of valuations for which a given property (such as reachability or unavoidability) holds.
Ideally, this set of parameter valuations comes in a \emph{symbolic} form, \eg{} using a finite union of polyhedra representing \emph{dense} sets of valuations.
This density can be useful notably in the context of robustness or implementability of a system (see, \eg{}~\cite{BMS13}).
The goal of this work is to achieve synthesis of \emph{dense} sets of parameter valuations, while guaranteeing the termination of our algorithms.\label{newtext:intro:dense}

\subsection{Related work}
The simple problem of the emptiness of the valuations set such that some location of a PTA is reachable (``reachability-emptiness'') is undecidable in both discrete and dense-time settings~\cite{AHV93}, even with only three parametric clocks (\ie{} clocks compared to a parameter)~\cite{Miller00,BBLS15},
and even with only strict constraints~\cite{Doyen07}.
It is decidable for a single parametric clock in discrete and dense time~\cite{AHV93}, and in discrete time with two parametric clocks and one parameter (and arbitrarily many non-parametric clocks)~\cite{BO14,GH21}, or for a single parametric clock (with arbitrarily many non-parametric clocks)~\cite{BBLS15}.
More complex properties expressed in parametric TCTL have been studied in~\cite{Wang96,BR07,BDR08}.

Subclasses of PTAs have been studied, most notably L/U-PTAs, in which the parameters are partitioned into parameters that can only be compared to a clock as an upper bound, or as a lower bound.
The reachability-emptiness problem is decidable for L/U-PTAs~\cite{HRSV02}, but no synthesis algorithm is provided, and there are indeed practical difficulties in proposing one~\cite{JLR15}.
When further restricting to L- or U-PTAs (with only lower-bound, resp.\ upper-bound parameters), the integer-valued parameters can be synthesised~\cite{BlT09}.
However, the full TCTL-emptiness problem was proved to be undecidable even for the restricted class of U-PTAs~\cite{ALR18FORMATS}.

In~\cite{ACEF09}, the inverse method takes advantage of a reference parameter valuation.
When it terminates, this algorithm outputs a (possibly underapproximated) set of parameter valuations for which the discrete behaviour (set of traces seen as alternating sequences of locations and actions) is exactly the same as the one of the reference valuation.
The traces preservation problem (\ie{} given a reference parameter valuation whether there exists another valuation for which the discrete behaviour is the same) is undecidable for both general PTAs and L/U-PTAs~\cite{ALM20}.

In~\cite{JLR15}, the focus is on integer-valued bounded parameters (but still considering dense-time), for which many problems are obviously decidable, and symbolic algorithms are provided to compute the set of correct integer parameter valuations ensuring a reachability (``\IEF{}'') or unavoidability (``\IAF{}'') property.
A drawback is that returning only integer points prevents designers to use the synthesised constraint to study the robustness or implementability of their system.
Control of real-time systems with integer-valued bounded parameters was then studied in~\cite{JLR22} using similar techniques.

Finally note that an extrapolation similar to ours~\cite{ALR15} was later proposed in~\cite{BBBC16}.

\subsection{Contribution}\label{ss:contribution}

We propose here a terminating algorithm that computes a dense under-approximation of the set of parameter valuations ensuring reachability in bounded PTAs (\ie{} PTAs with a bounded parameter domain).
We also propose two further algorithms ensuring unavoidability and preservation of the untimed behaviour \wrt{}a reference parameter valuation, respectively.
These under-approximations are ``integer-complete'' in the sense that they are guaranteed to contain at least all the correct integer valuations given in the form of a finite union of polyhedra;
this is to say, only some non-integer (rational) points may be missing.\label{newtext:almostcomplete}
To the best of our knowledge, these algorithms are the first synthesis algorithms that return a dense integer-complete result for a subclass of PTAs (namely bounded PTAs) for which the corresponding emptiness problems are undecidable;
in fact, we are not aware of any terminating synthesis algorithm with a dense result with a partial completeness guarantee.
Note that the synthesis algorithms for two subclasses of L/U-PTAs (namely L-PTAs and U-PTAs)~\cite{BlT09} are complete, but the setting is restricted to integer valuations.

While of great practical interest, our algorithms are in essence quite similar to those of~\cite{JLR15}. We however demonstrate that while the algorithms from~\cite{JLR15} also return a symbolic representation of the ``good'' integer parameter valuations, interpreting the result of the \IAF{} algorithm as dense is not correct in the sense that some non-integer parameter valuations in that result may not ensure the unavoidability property. Furthermore, since we produce rational-valued parameter valuations, we cannot use anymore the result from~\cite{JLR15} ensuring termination of the algorithms, which allows to derive a bound on clock valuations but relies on the parameters being bounded integers.
One of the main \emph{technical} contributions of this paper is therefore the derivation of a maximum-constant-based parametric extrapolation operator for bounded PTAs that ensures termination of our algorithms. To the best of our knowledge this operator is the first of its kind.

Finally, we have implemented the three algorithms for reachability, unavoidability and traces preservation, and we report on them.

\subsection{Comparison with~\cite{ALR15}}
This manuscript is a significant extension of~\cite{ALR15}.
In addition to the inclusion of all proofs (missing in~\cite{ALR15}) and of additional examples, we rephrased most definitions and explanations; we also added the new section on traces preservation (\cref{ss:IM}).
We also report on experiments in the new \cref{section:implementation-romeo,section:implementation-imitator}.

\subsection{Outline}
We first recall the necessary definitions in \cref{section:preliminaries}.
We present our parametric extrapolation in \cref{section:extrapolation}.
We then introduce our terminating algorithms (namely \RIEF{}, \RIAF{}, \RITPS{}) in \cref{section:algos}.
We report on our implementations, and we relate experiments on several benchmarks using \romeo{} (to evaluate \RIEF{} and \RIAF{}) in \cref{section:implementation-romeo}, and using \imitator{} (to evaluate \RITPS{}) in \cref{section:implementation-imitator}.
We conclude in \cref{section:conclusion}.

\section{Preliminaries}
\label{section:preliminaries}
\subsection{Clocks, parameters and constraints}\label{ss:constraints}

Let $\setN$,
$\setZ$,
$\setQplus$,
$\setRplus$,
$\setR$
denote the sets of non-negative integers, integers, non-negative rationals, non-negative reals and reals respectively.

Let $\subR \subseteq \setR$.
Given a finite variables set~$\Variables$, a $\subR$-valuation over~$\Variables$ is a function from~$\Variables$ to~$\subR$.

\subsubsection{Constraints and convex polyhedra}\label{sss:constraints}
In the following, we assume ${\bowtie} \in \{<, \leq, \geq, >\}$.

A \emph{constraint}~$\Constr$ over a set of variables~$\Variables$ is a conjunction of inequalities of the form $\lterm \bowtie 0$, where
$\lterm$ denotes a linear term over $\Variables$ of the form $\sum_{1 \leq i \leq |\Variables|} \alpha_i \variable_i + d$, with $\variable_i \in \Variables$ and $\alpha_i, d \in \setZ$.

A valuation $\varval$ over $\Variables$ is a \emph{solution} of a constraint $\Constr$ if by replacing each occurrence of each variable $\variable$ in $\Constr$ by $\valuate{\variable}{\varval}$ simplifies to the ``true'' Boolean value.
Given a constraint~$\Constr$ over $\Variables$, we denote by $\valuations{\Constr}$ the set of its solutions.

A set of valuations $\C$ is a \emph{convex polyhedron\footnote{
Given an arbitrary order on~$\Variables$, a valuation over~$\Variables$ can be seen as a real vector of size~$|\Variables|$ whose coordinates are the values associated to the variables and thus our convex polyhedra indeed represent subsets of~$\setR^{|\Variables|}$ and the set of $\setR$-valuations over $\Variables$ is a vector space over $\setR$.%
}} if there exists a constraint $\Constr$ such that $\C=\valuations{\Constr}$.

$\C$ is topologically closed if it is a solution to a conjunction of constraints using only non-strict inequalities.

By the Minkowski-Weyl theorem (see, \eg{} \cite{Schrijver86}), topologically closed convex polyhedra can be equivalently defined using a set of vertices and extremal rays.

Let $\Conv(\C)$ denote the convex hull of set~$\C$ of valuations, that is $\Conv(\C)=\{\varval \mid \exists k,\exists \lambda_1,\ldots,\lambda_k\in[0,1],\exists\varval_1,\dots,\varval_k\in \C\text{ s.t. } \varval = \sum_{i=1}^k \lambda_i \varval_i \land \sum_{i=1}^k\lambda_i = 1\}$.\label{def:Conv}

Then for any convex polyhedron~$\C$, there exist $\varval_1,\ldots,\varval_j\in\C$, $r_1,\ldots, r_k$ some $\setR$-valuations, and $\mu_1,\ldots,\mu_k$ non-negative real numbers such that for all $\varval \in \C$, there exist $\varval' \in \Conv(\{\varval_1,\ldots,\varval_j\})$ and $\varval = \varval'+ \sum_{i=1}^k \mu_i r_i$. Without loss of generality, the $\varval_i$ can be uniquely chosen such that none of them belongs to the convex hull of the others, and they are then called the \emph{vertices} of~$\C$.
Similarly, the $r_i$ can be chosen, uniquely up to some multiplicative constant, such that none can be expressed as a non-negative combination of the others, and are then called the \emph{extremal rays} of~$\C$.
Informally, the latter represent the limit directions in which the polyhedron is infinite.

Since we consider also strict inequalities, the above generator description is incomplete in general and we would also need to introduce \emph{closure points} that play the role of vertices in the topological closure of the polyhedron.
Their handling would be similar to what we do with vertices in the following. The reader is directed to~\cite{BHZ05} for more details, and in particular Theorem~4.4 therein for an equivalent to the Minkowski-Weyl theorem for non-necessarily closed polyhedra.

Also note that any non-closed polyhedron can be represented by a closed polyhedron with one extra dimension~\cite{HPR94}.

In the sequel to keep the developments more readable we will assume all polyhedra are topologically closed and therefore deal only with vertices and rays.

\subsubsection{Clocks and parameters}

Throughout this paper, we assume a set~$\Clock = \{ \clock_1, \dots, \clock_\ClockCard \} $ of \emph{clocks}, \ie{} non-negative real-valued variables that evolve at the same rate.
A \emph{clock valuation} is an $\setRplus$-valuation over~$\Clock$.
We write $\vec{0}$ for the valuation such that for all $\clock\in\Clock$, $\vec{0}(x) = 0$.
Given $\resets \subseteq \Clock$, we define the \emph{reset} of a valuation~$\clockval$, denoted by $\reset{\clockval}{\resets}$, as follows: $\reset{\clockval}{\resets}(\clock) = 0$ if $\clock \in \resets$, and $\reset{\clockval}{\resets}(\clock)=\clockval(\clock)$ otherwise.
Given $d \in \setRplus$, $\clockval + d$ denotes the valuation such that $(\clockval + d)(\clock) = \clockval(\clock) + d$, for all $\clock \in \Clock$.
An \emph{integer} clock valuation in an $\setN$-valuation over~$\Clock$.

We assume a set~$\Param = \{ \param_1, \dots, \param_\paramcard \} $ of \emph{parameters}, \ie{} unknown constants.
A \emph{parameter valuation} $\pval$ is a $\setQplus$-valuation over~$\Param$.
An \emph{integer} parameter valuation is an $\setN$-valuation over~$\Param$.

\subsubsection{Simple clock constraints}\label{sss:guards}
Let $\plterm$ denote a \emph{parametric linear term}, \ie{} a linear term over~$\Param$.
A \emph{simple clock constraint}~$\guard$ is a constraint over $\Clock \cup \Param$ defined by a conjunction of inequalities of the form $\clock \bowtie \plterm$, \ie{}
	$\clock \bowtie \sum_{1 \leq i \leq |\Param|} \alpha_i \param_i + d$, with $\clock \in \Clock$, $\param_i \in \Param$, and $\alpha_i, d \in \setZ$.

\subsubsection{Sets of valuations}\label{sss:valuations}

Given a set~$\C$ of valuations over $\Clock \cup \Param$, and given a parameter valuation~$\pval$, $\valuate{\C}{\pval}$ denotes the set of valuations over~$\Clock$ obtained by replacing each parameter~$\param$ in~$\C$ with~$\pval(\param)$.

Given a parameter valuation $\pval$ and a clock valuation $\clockval$, we denote by $\wv{\clockval}{\pval}$ the valuation over $\Clock\cup\Param$ such that
for all clocks $\clock$, $\valuate{\clock}{\wv{\clockval}{\pval}}=\valuate{\clock}{\clockval}$
and
for all parameters $\param$, $\valuate{\param}{\wv{\clockval}{\pval}}=\valuate{\param}{\pval}$.

An \emph{integer point} is a valuation $\wv{\clockval}{\pval}$, where $\clockval$ is an integer clock valuation, and $\pval$ is an integer parameter valuation.

Given a set~$\C$ of valuations over $\Clock \cup \Param$,
we define the \emph{future} of~$\C$, denoted by $\timelapse{\C}$, as the set of valuations over~$\Clock \cup \Param$ obtained from~$\C$ by delaying all clocks by an arbitrary amount of time:
$\timelapse{\C} = \{ \wv{\clockval'}{\pval} \mid \exists \clockval,d : \clockval \in \valuate{\C}{\pval} \land
\clockval' = \clockval + d, %
d \in \setRplus \}$.
Dually, we define the \emph{past} of~$\C$, denoted by $\timepast{\C}$, as the set of valuations over~$\Clock \cup \Param$ obtained from~$\C$ by letting time pass backward by an arbitrary amount of time:
$\timepast{\C} = \{ \wv{\clockval'}{\pval} \mid \exists\clockval,d : \clockval \in \valuate{\C}{\pval} \land
\clockval' + d = \clockval , %
	d \in \setRplus \}$.

Given $\resets \subseteq \Clock$, we define the \emph{reset} of~$\C$, denoted by $\reset{\C}{\resets}$, as the set of valuations over~$\Clock \cup \Param$ obtained from~$\C$ by resetting the clocks in~$\resets$, and keeping the other clocks unchanged.
That is, \(\wv{\clockval'}{\pval} \in \reset{\C}{\resets} \) iff \(\exists \clockval : \Clock \to \setRplus \) s.t.\ \( \wv{\clockval}{\pval} \in \C \) and
	$\clockval' = \reset{\clockval}{\resets}$. %

We denote by $\projectP{\C}$ the projection of~$\C$ onto~$\Param$,
\ie{} the set of valuations over~$\Param$ defined as follows:
$\{ \pval \mid \exists \clockval \text{ s.t. } \wv{\clockval}{\pval} \in \C \}$.

Recall that, if $\C,\C'$ are convex polyhedra, then $\C \cap \C'$, $\timelapse{\C}$, $\timepast{\C}$, $\reset{\C}{\resets}$ and $\projectP{\C}$ are all convex polyhedra~\cite[pp.6-7]{JLR15}.

\subsection{Parametric timed automata}

Parametric timed automata (PTAs) extend timed automata with parameters within guards and invariants in place of integer constants~\cite{AHV93}.

\begin{definition}\label{def:PTA}
	A PTA
	$\A$ is a tuple \mbox{$\A = (\Actions, \Loc, \locinit, \Clock, \Param, \ParamDomain, \invariant, \Edges)$}, where:
	\begin{enumerate}%
		\item $\Actions$ is a finite set of actions,
		\item $\Loc$ is a finite set of locations,
		\item $\locinit \in \Loc$ is the initial location,
		\item $\Clock$ is a finite set of clocks,
		\item $\Param$ is a finite set of parameters,
		\item $\ParamDomain \subseteq \setQplus^{\Param}$ is the parameter domain, \ie{} the set of admissible parameter valuations,
		\item $\invariant$ is the invariant, assigning to every $\loc\in \Loc$ a simple clock constraint $\invariant(\loc)$,
		\item $\Edges$ is a set of edges  $\edge = (\loc,\guard,\action,\resets,\loc')$
		where
		$\loc,\loc'\in \Loc$ are the source and target locations, $\action \in \Actions$, $\resets\subseteq \Clock$ is a
		set of clocks to be reset to~$0$, and
		$\guard$ (the ``guard'') is a simple clock constraint.
	\end{enumerate}
\end{definition}
\begin{example}
	Consider the PTA in \cref{figure:example:reviewerACSD}.
	It features 4 locations, 2 clocks and 1~parameter.
	The initial location is~$\loc_1$.
	The transition from~$\loc_1$ to~$\loc_2$ is guarded by ``$\clockxtext = 1$'' and resets~$\clockxtext$ to~0.
	This PTA has no invariant.
\end{example}

In this paper, we consider \emph{bounded} PTAs.
A \emph{bounded parameter domain} is such that $\ParamDomain$ assigns to each parameter a minimum integer bound and a maximum integer bound.
That is, each parameter~$\param_i$ ranges in an interval $[a_i, b_i]$, with $a_i,b_i \in \setN$.
Hence, a bounded parameter domain is a hyperrectangle in $\cardinality{\Param}$ dimensions.

\begin{definition}[bounded PTA]\label{def:boundedPTA}
	A bounded PTA is a PTA the parameter domain of which is bounded.
\end{definition}
\begin{remark}
	In this paper, our bounded parameter domains are simple closed intervals with integer bounds.
	However, we noted in~\cite{ALR22} that the nature of these intervals (open or closed) can have an impact on decidability, notably for some subclasses of PTAs~\cite{BlT09}.
\end{remark}

Given a parameter valuation~$\pval$, we denote by $\valuate{\A}{\pval}$ the non-parametric structure where all occurrences of a parameter~$\param_i$ have been replaced by~$\pval(\param_i)$.
In the following, we may denote as a \emph{timed automaton (TA)} any such structure $\valuate{\A}{\pval}$, by assuming a rescaling of the constants:
	by multiplying all constants in $\valuate{\A}{\pval}$ by their least common denominator, we obtain an equivalent timed automaton (with integer constants, as in~\cite{AD94}).

\subsubsection{Concrete semantics}\label{sss:concrete-semantics}

Let us first recall the concrete semantics of~TAs.

\begin{definition}[Semantics of a TA]\label{definition:semanticsTA}
	Given a PTA $\A = (\Sigma, \Loc, \locinit, \Clock, \Param, \ParamDomain, \invariant, \Edges)$,
	and a parameter valuation~\(\pval\),
    the \emph{concrete semantics} of $\valuate{\A}{\pval}$ is given by the timed transition system $(\Q, \qinit, \TransConcrete)$, with
	\begin{itemize}
		\item 
            $\Q = \{ (\loc, \clockval) \in \Loc \times \setRplus^{\cardinality{\Clock}} \mid \wv{\clockval}{\pval} \in \valuations{\invariant(\loc)} \}$
        \item $\qinit = (\locinit, \vec{0}) $,
		\item delay transitions: 
            $\big((\loc, \clockval), d, (\loc, \clockval+d) \big) \in {\TransConcrete}$, with $d \in \setRplus$, if $\forall d' \in [0, d], (\loc, \clockval+d') \in \Q$;
		\item discrete transitions:
            $\big((\loc, \clockval), \edge, (\loc',\clockval') \big) \in {\TransConcrete}$ %
				if $(\loc, \clockval) , (\loc',\clockval') \in \Q$, there exists $\edge = (\loc,\guard,\action,\resets,\loc') \in \Edges$, $\clockval'= \reset{\clockval}{\resets}$, and $\wv{\clockval}{\pval} \in \valuations{\guard}$.
	\end{itemize}
\end{definition}

We assume that $\qinit \in \Q$, \ie{} $\wv{\vec{0}}{\pval} \in \valuations{\invariant(\locinit)}$, \ie{} the initial clock valuation satisfies the invariant of the initial location.\label{newtext:initial-invariant}

We refer to the states of the concrete semantics of a TA as its \emph{concrete states}.
As usual, given two concrete states $\state,\state'$, we write $\state\fleche{a}\state'$, for $a\in\setRplus\cup\Edges$, instead of $(\state, a,\state') \in {\TransConcrete}$.
We also further define relation $\TransConcreteEdge$ by $\big((\loc, \clockval), \edge, (\loc', \clockval')\big) \in \TransConcreteEdge$ if $\exists \clockval''\in\setRplus^{\cardinality{\Clock}},\edge\in\Edges, d\in\setRplus:  (\loc, \clockval) \fleche{d} (\loc,\clockval'') \fleche{\edge} (\loc',\clockval')$.
A concrete run~$\varrun$ of a TA is an alternating sequence of concrete states of $\Q$ and edges of the form
$\state_0 \FlecheConcrete{\edge_0} \state_1\FlecheConcrete {\edge_1} \cdots \FlecheConcrete{\edge_{m-1}} \state_m$, such that for all $i = 0, \dots, m-1$, $\edge_i \in \Edges$, and $(\state_i , \edge_i , \state_{i+1}) \in \TransConcreteEdge$.
We say that states $\state_i$, for $i = 0, \dots, m$, \emph{belong} to~$\varrun$.\label{newtext:belong}
Given a concrete state~$\state = (\loc, \clockval)$, we say that $\state$ is reachable (or that $\valuate{\A}{\pval}$ reaches~$\state$) if $\state$ belongs to a run of $\valuate{\A}{\pval}$.
By extension, a location~$\loc$ is reachable if there exists $\clockval$ such that $(\loc,\clockval)$ is reachable.
By extension again, a subset of locations $\LocsTarget \subseteq \Loc$ is reachable if there exists $\loc \in \LocsTarget$ such that $\loc$ is reachable.

A \emph{maximal} concrete run is a run that is either infinite, or that cannot be extended.

Given a concrete run $(\loc_0, \clockval_0) \FlecheConcrete{\edge_0} (\loc_1, \clockval_1) \FlecheConcrete {\edge_1} \cdots \FlecheConcrete{\edge_{m-1}} (\loc_m, \clockval_m)$, its corresponding \emph{trace} is
	$\loc_0  \FlecheConcrete{\action_0} \loc_1 \FlecheConcrete {\action_1} \cdots \FlecheConcrete{\action_{m-1}} \loc_m $,
		where $\action_i$ is the action of $\edge_i$.
The set of all traces of a TA~$\A$ is called its \emph{trace set}, denoted by $\Traces(\A)$.
That is, traces record both the locations and actions, but leave out the quantitative information such as the clock valuations and timing delays.
\subsubsection{Symbolic semantics}\label{sss:symbolic}
We now recall the symbolic semantics of PTAs~\cite{HRSV02,ACEF09,JLR15,AMPP22}.
Note that we usually use \textbf{bold} font to denote anything symbolic, \ie{} (sets of) symbolic states, and constraints.

\begin{definition}[Symbolic state]\label{definition:symbolic-state}
	Given a PTA~$\A$, a symbolic state of~$\A$ is in the form of a pair $(\loc, \C)$ where $\loc \in \Loc$ is a location of~$\A$, and $\C$ is a set of valuations.
\end{definition}

Given a parameter valuation~$\pval$, a symbolic state $\symbstate = (\loc, \C)$ is $\pval$-compatible if $\pval \in \projectP{\C}$.

Given a PTA~$\A$ and a parameter valuation~$\pval$, given a concrete state $(\loc, \clockval)$ of~$\valuate{\A}{\pval}$ and a symbolic state $(\loc', \C)$ of~$\A$,
we write $(\loc, \clockval) \in \valuate{(\loc', \C)}{\pval}$ if $\loc = \loc'$ and $\wv{\clockval}{\pval} \in \C$.

\begin{definition}[Successor of a set of valuations]\label{def:succ}
	Given a PTA $\A = (\Actions, \Loc, \locinit, \Clock, \Param, \ParamDomain, \invariant, \Edges)$,
	given an edge $\edge = (\loc,\guard,\action,\resets,\loc') \in \Edges$,
	given a set~$\C$ of valuations over~$\Clock \cup \Param$, we define
	\[\Succ(\C, \edge) = \timelapse{\big(\reset{(\C \cap \valuations{\guard})}{\resets} \cap \valuations{\invariant(\loc')} \big )} \cap \valuations{\invariant(\loc')}\text{.}\]
\end{definition}
\begin{definition}[Symbolic semantics]\label{def:PTA:symbolic}
	Given a PTA $\A = (\Actions, \Loc, \locinit, \Clock, \Param, \ParamDomain, \invariant, \Edges)$, %
	the symbolic semantics of~$\A$ is the labelled transition system called \emph{parametric symbolic state graph}
	$ \PZG = ( \Edges, \SymbStates, \symbstateinit, {\TransSymb} )$, with
	\begin{itemize}
		\item $\SymbStates = \{ (\loc, \C) \mid \C \subseteq \valuations{\invariant(\loc)} \}$,
		\item $\symbstateinit = \Big(\locinit, \timelapse{\big( \{ \wv{\clockval}{\pval} \mid \clockval = \vec{0} \land \wv{\clockval}{\pval} \in \valuations{\invariant(\loc_0)} \} \big)} \cap \valuations{\invariant(\loc_0)} \Big)$,
				and
		\item $\big((\loc, \C), \edge, (\loc', \Succ(\C, \edge))\big) \in {\TransSymb} $ if $\edge = (\loc,\guard,\action,\resets,\loc') \in \Edges$ and $\Succ(\C, \edge) \neq \emptyset$.
	\end{itemize}
\end{definition}

That is, in the parametric symbolic state graph, nodes are symbolic states, and arcs are labelled by \emph{edges} of the original PTA.
In the following, we denote symbolic states using bold font~(``$\symbstate$'').

Given $\symbstate = (\loc, \C)$ and $\edge = (\loc,\guard,\action,\resets,\loc') \in \Edges$,
we write $\Succ(\symbstate, \edge) = (\loc', \Succ(\C, \edge))$.\label{newtext:succ}

Given $\symbstate = (\loc, \C)$, as an abuse of notation, we write $\symbstate = \emptyset$ whenever $\C = \emptyset$.\label{newtext:KFalse}

Given a PTA~$\A$ with parametric symbolic state graph $ \PZG = ( \Edges, \SymbStates, \symbstateinit, {\TransSymb} )$, we let $\Init(\A)$ denote~$\symbstateinit$.

A symbolic run of a PTA is an alternating sequence of symbolic states and edges of the form 
$\symbstate_0 \FlecheSymbolic{\edge_0} \symbstate_1\FlecheSymbolic {\edge_1} \cdots \FlecheSymbolic{\edge_{m-1}} \symbstate_m$, such that for all $i = 0, \dots, m-1$, $\edge_i \in \Edges$, and %
$(\symbstate_{i}, \edge, \symbstate_{i+1}) \in {\TransSymb}$.
Given a symbolic state~$\symbstate$, we say that $\symbstate$ is reachable if $\symbstate$ belongs to a symbolic run of~$\A$.

Let $(\locinit, \C) = \Init(\A)$.
Since all basic operators on polyhedra (intersection, future, time past, resets…)\ preserve polyhedra as mentioned in \cref{sss:valuations}, $\C$ is a convex polyhedron.
For the same reason, we can recall the following property:

\begin{propertyC}[{\cite[Property~1]{JLR15}}]\label{property:polyhedron}
	For any symbolic state $(\loc, \C)$ reachable from~$\symbstateinit$ in the parametric zone graph of~$\A$, $\C$ is a convex polyhedron.
\end{propertyC}

\paragraph{Relationship between concrete and symbolic runs}\label{paragraph:symbolic-concrete}
Let us recall the relationship between concrete and symbolic runs, from \eg{}~\cite{JLR15}.

\begin{lemC}[{\cite[Corollary~2]{JLR15}}]\label{cor:next}
	For each parameter valuation $\pval$, reachable symbolic state~$\symbstate$, and concrete state~$\state$,
	we have $\state \in \valuate{\symbstate}{\pval}$ if and only if there is a run of $\valuate{\A}{\pval}$ from the initial state leading to~$\state$.
\end{lemC}

\subsubsection{Integer hulls}\label{sss:IH}
We briefly recall some definitions from~\cite{JLR15}.
Let $\C$ be a convex polyhedron.

The integer hull of a topologically closed polyhedron, denoted by $\IH(\C)$, is defined as the convex hull of the integer vectors of~$\C$, \ie{} $\IH(\C) = \Conv(\IV(\C))$, where $\IV=\{\varval \mid \varval \in\C \land \varval \text{ is a }\setZ\text{-valuation}\}$ denotes the set of vectors with integer coordinates.\label{def:IV}

The interested reader is refered to~\cite{JLR15} for a discussion on how the integer hull operation can be adapted to non-necessarily closed polyhedra.

We treat integer hulls for finite unions of convex polyhedra in a manner similar to~\cite{JLR15}: given a (possibly non-convex) finite union of convex polyhedra $\bigcup_i \C_i$, we write $\IH(\bigcup_i \C_i)$ for the set $\bigcup_i(\IH(\C_i))$.
Given a symbolic state $\symbstate = (\loc, \C)$, we often write $\IH(\symbstate)$ for $(\loc, \IH(\C))$.
\subsubsection{Computation problems}\label{sss:problems}
Given a class of decision problems %
	(such as reachability), we consider the problem of synthesizing the set (or part of it) of parameter valuations $\pval$ such that $\valuate{\A}{\pval}$ satisfies a given problem.
Here, we mainly focus on
reachability (\ie{} ``synthesize the set of parameter valuations for which a (concrete) run visits some goal location'') called~EF,
and unavoidability (\ie{} ``do all maximal (concrete) runs go through some goal locations'') called~AF.\footnote{%
		The EF and AF notations come from TCTL and denote reachability and unavoidability, respectively; they have been used in several works since~\cite{JLR15}.
	}

\defProblem
	{EF-synthesis}
	{A PTA~$\A$ and a subset~$\LocsTarget$ of its locations}
	{Synthesize the set of parameter valuations $\pval$ such that~$\LocsTarget$ is reachable in $\valuate{\A}{\pval}$ from the initial state.}
\defProblem
	{AF-synthesis}
	{A PTA~$\A$ and a subset~$\LocsTarget$ of its locations}
	{Synthesize the set of parameter valuations~$\pval$ such that all concrete runs eventually reach~$\LocsTarget$ in $\valuate{\A}{\pval}$.}

\smallskip

Finally, we will be interested in \cref{ss:IM} in the traces preservation synthesis.

\defProblem
	{Traces preservation-synthesis}
	{A PTA~$\A$ and a parameter valuation~$\pval$}
	{Synthesize the set of parameter valuations~$\pval'$ such that $\valuate{\A}{\pval'}$ has the same trace set as~$\valuate{\A}{\pval}$.}

\smallskip

Since the stated problems are not computable in general, our goal here will be to synthesize sets of parameters containing all integer valuations solution to the problem, and possibly some rational valuations too.\label{newtext:goal}
\section{Parametric extrapolation}\label{section:extrapolation}

In this section, we present an extrapolation based on the classical $\LargestC$-extrapolation used for the symbolic state abstraction for timed automata (see \eg{}~\cite{DT98}), but this time in a parametric setting.
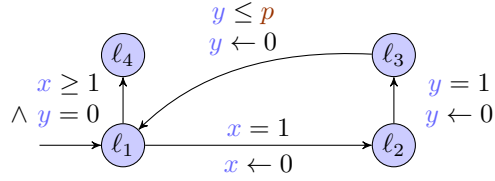
\begin{figure}[tb]
	\centering
    \small
	\begin{tikzpicture}[PTA, scale=1.2, xscale=1.5]
 
		\node[location, initial] at (0,0) (l1) {$\loc_1$};
 
		\node[location] at (2,0) (l2) {$\loc_2$};
 
		\node[location] at (2,1) (l3) {$\loc_3$};
 
		\node[location] at (0, 1) (l4) {$\loc_4$};

		\path (l1) edge[below ] node[above]{$ \clockx = 1$} node[below]{$\clockx\assign0$} (l2);

		\path (l1) edge node{\begin{tabular}{c @{\ } c}
		& $ \clockx \geq 1$
		\\
		$\land$ & $ \clocky = 0$
		\end{tabular}} (l4);

		\path (l2) edge[right] node{\begin{tabular}{c @{\ } c}
		& $ \clocky = 1$\\
		 & $\clocky\assign0$\\
		\end{tabular}} (l3);

		\path (l3) edge[above,bend right] node{\begin{tabular}{c @{\ } c}
		& $ \clocky \leq \styleparam{p}$\\
		 & $\clocky\assign0$\\
		\end{tabular}} (l1);
 
	\end{tikzpicture}
	
	\caption{Motivating PTA}
	\label{figure:example:reviewerACSD}

\end{figure}

In~\cite{JLR15}, termination was guaranteed because it was possible to bound all clock valuations using a bound depending only on the integer constants of the PTA and on the bounded parameter domain, which in turns allow to bound the value of all clocks, and finally the number of possible symbolic states.
Let us first show that this does not hold anymore over rational-valued parameters, which motivates the use of an extrapolation.

\begin{example}\label{example:motivation-extrapolation}
Consider the PTA in \cref{figure:example:reviewerACSD}.
The smaller the value of~$\param$, the more loops we need to take, and therefore the longer the time to reach~$\loc_4$:
after a number~$n$ of times through the loop, when arriving in $\loc_1$ from $\loc_3$ we get constraints in~$\loc_1$ of the form
$ 0 \leq x \leq n \times \param$, with $n$ growing without bound. Indeed assume that $x\leq n\param$ and $y=0$ in $\loc_1$. Then $x = 0$ and $y\geq 1 -n\param$ in $\loc_2$, then $x\leq n\param$ in $\loc_3$ and finally $x\leq (n+1)\param$ and $y=0$ back in $\loc_1$. 
Since the duration of the loop (between two consecutive arrivals in~$\loc_2$) is exactly~1, then the time necessary to reach location~$\loc_4$ is arbitrarily large, depending on the value of~$\param$---even if the parameter~$\param$ is bounded (\eg{} in $[0,1]$).
This was not the case in~\cite{JLR15} due to the fact that parameters were bounded integers.
Hence, on this PTA, we cannot just apply the integer hull (as in~\cite{JLR15}) to ensure termination of our algorithms.
\end{example}
\begin{example}\label{example:k-nonconvex}
 Now, we will show that the union for all valuations of the parameters  of the classical $\LargestC$-extrapolation used for the classical symbolic state abstraction for timed automata~\cite{DT98} leads to a non-convex polyhedron.
Let us consider the PTA in \cref{figure:extrapo:pta} with a parameter~$p$ such that $0 \leq \param \leq 1$.
	Since when~$x$ is reset to~$0$ by taking the loop, $y$ has increased by at most~$\param$ due to the invariant, then by taking $n$~times the loop we obtain:
	$$
	0\leq x\leq \param \ \  \land \ \ 
	0\leq y-x\leq (n+1)\times \param \ \  \land \ \ 
	0 \leq \param \leq 1
	$$

Recall that the classical $\LargestC$-extrapolation for timed automata (see~\cite{DT98,BBLP06}) consists in exhibiting the largest constant of the model (``$\LargestC$''): then, each guard is either always true or always false for any clock valuation beyond this threshold.
Extrapolation consists in adding (spurious) concrete states beyond~$\LargestC$ such that extrapolated polyhedra overlapping~$\LargestC$ have a finite number of possible shapes---which ensures the finiteness of the (extrapolated) symbolic state graph, and therefore termination of the related symbolic algorithms.

Here,
the greatest  constant of the model is $\LargestC = 1$.
	After one loop (and some delay), $y$ can be greater than~$1$.
	Then, for each value of~$p$, we can apply to the polyhedron given above the classical $\LargestC$-extrapolation for TAs.
	The union for all valuations of $p$ of these extrapolations, projected to the plan $(y,p)$, is depicted by the colored part (light orange and dark blue) of \cref{figure:extrapo:extrapo}.
	The right-most polyhedron (dark blue) is unbounded in~$y$ (depicted by the gradient).
	Note that the obtained extrapolation is non-convex.
	This is similar to the non-convex approximations for TAs~\cite{HKSW11}.
\end{example}
	\begin{figure}[tb]
	\centering
	\small

	\begin{subfigure}[b]{0.4\textwidth}
		\centering
		\begin{tikzpicture}[PTA, thin]

			\node[location, initial] at(-2,1) (l0) {$\loc_0$};
			\node[location] at(0,1) (l1) {$\loc_1$};
			\node [invariant, below] at (-2,0.5) {$\clockx \leq \paramp$};

			\path
				(l0) edge[loop] node [above] {$\clockx \assign 0$} (l0)
				(l0) edge[] node {$\clocky > 1$}  (l1)
			;
			\end{tikzpicture}
			
		\caption{PTA (assume $\ParamDomain(\paramp) = [0, 1]$)}
		\label{figure:extrapo:pta}
		\end{subfigure}
		\hfill
	\begin{subfigure}[b]{0.4\textwidth}
		\centering
		\begin{tikzpicture}[PTA, thin]
			\tikzstyle{axe} = [line width=1pt, ->, draw=black!80]
			\tikzstyle{fondgris} = [fill=black!5, draw=none]
			\tikzstyle{zone} = [fill=blue!40!white, draw=blue!60!white,line width=1pt]
			\tikzstyle{openzone1} = [fill=orange!20!white, draw=none]
			\tikzstyle{openzone2} = [fill=blue!60!white, draw=none]

			\draw[openzone1] (2, 1) -- (3, 1.5) -- (3, 2)-- (2, 2) -- cycle;
			\draw[openzone2] (3, 1.5) -- (5, 1.5) -- (5, 2) -- (3, 2) -- cycle;
			\draw[left color=blue!60!white,draw=none] (5, 2) rectangle (5.5, 1.5);

			\path[axe] %
				(2, 1) -- ++ (3.5, 0);
			\path[axe] %
				(2, 1) -- ++ (0, 1.5);

			\node at (5.7,1) {$\clocky$};
			\node at (2,2.7) {$\paramp$};

			\foreach \x in {0, 1, ..., 3} %
				\draw [-](\x+2, 1) -- (\x+2, 0.8);
			\foreach \x in {0, 1, ..., 3} %
				\node at (\x+2,0.6) {\x};
			\foreach \x in {0, 1} %
				\draw [-](2, \x+1) -- (1.8, \x+1); %
			\foreach \x in {0, 1} %
				\node at (1.5,\x+1) {\x};

			\end{tikzpicture}
			
		\caption{Extrapolation}
		\label{figure:extrapo:extrapo}
		\end{subfigure}
		
	\caption{Example illustrating the non-convex parametric extrapolation}
	\label{figure:extrapo}
	
	\end{figure}
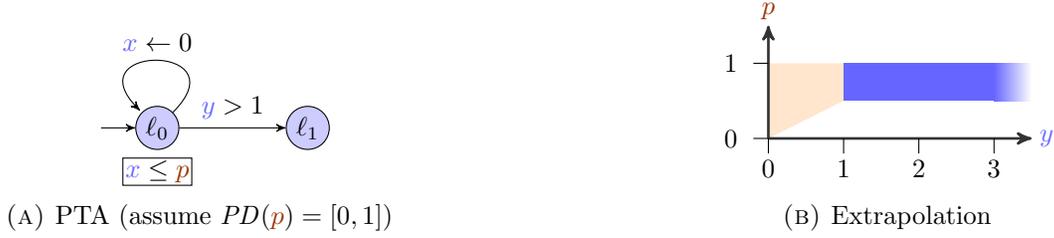
\subsection{A parametric extrapolation}\label{ss:parametric-extrapolation}

For any polyhedron $\C$ over a given set of variables~$\Variables$, for any variable $\variable \in \Variables$, we denote by $\cylinder{\variable}(\C)$ the \emph{cylindrification} of $\C$ along variable~$\variable$, \ie{} $\cylinder{\variable}(\C)=\{\varval \mid \exists \varval' \in \C, \forall \variable' \neq \variable, \varval'(\variable') = \varval(\variable')\text{ and } \varval(\variable) \geq 0\}$.
This is a usual operation that consists in \emph{unconstraining} variable~$\variable$.
The result $\cylinder{\variable}(\C)$ for a polyhedron~$\C$ is a polyhedron: one just need to project out variable~$\variable$ (\eg{} with the Fourier-Motzkin algorithm), and then add it back unconstrained.

\subsubsection{Defining extrapolation}\label{sss:extrapolation-definitions}
Let us first introduce our concept of $(\LargestC, \clock)$-extrapolation for a single clock.

\begin{definition}[$(\LargestC,\clock)$-extrapolation]
    \label{def:extx}
	Let $\C$ be a polyhedron. Let $\LargestC$ be a non-negative integer constant and $\clock$ be a clock.
	The $(\LargestC,\clock)$-extrapolation of $\C$, denoted by $\Ext{\LargestC}{\clock}(\C)$, is defined as:
	$$\Ext{\LargestC}{\clock}(\C)= \big(\C\cap \valuations{\clock \leq \LargestC}\big) \cup \Big( \cylinder{\clock}\big(\C\cap \valuations{\clock>\LargestC}\big)\cap \valuations{\clock>\LargestC} \Big)\text{.}$$
\end{definition}

Given $\symbstate = (\loc, \C)$, we write $\Ext{\LargestC}{\clock}(\symbstate)$ for $\big( \loc, \Ext{\LargestC}{\clock}(\C) \big)$.

\begin{example}
 To illustrate the $(\LargestC, \clock)$-extrapolation, we go back to the example of \cref{figure:extrapo:pta}.
 After one loop, the set~$\C$ of valuations associated with $\locinit$ is
	$\clockx \leq \paramp \land \clocky - \clockx \leq \paramp \land \clocky \geq \clockx \geq 0$.

According to \cref{def:extx}, $\Ext{1}{y}(\C)$ is made of two parts:
 \begin{itemize}
 	\item $\big(\C\cap \valuations{y \leq 1}\big)$ evaluates to $0 \leq \clockx \leq \clocky \leq 1 \land \clocky - \paramp \leq \clockx \leq \paramp$, depicted in light orange in \cref{figure:extrapo:extrapo} after projection onto $\clocky$ and~$\paramp$ (\ie{} $0 \leq \clocky \leq 1 \land \clocky \leq 2 \times \paramp$).

 	\item $\cylinder{y}\big(\C\cap \valuations{y>1}\big)\cap \valuations{y>1}$ evaluates to $0 \leq \clockx \leq \paramp \leq 1 \land \clocky > 1 \land \clockx + \paramp > 1$, depicted in dark blue in \cref{figure:extrapo:extrapo} after projection onto $\clocky$ and~$\paramp$ (\ie{} $\clocky > 1 \land \frac{1}{2} < \paramp \leq 1$).

 \end{itemize}

 Note that for this example the $(1,y)$-extrapolation gives the same result as the union for all valuations of the parameter $\param$ of the classical extrapolation for timed automata.
\end{example}

\cref{lemma:extxy} follows from \cref{def:extx}.

\begin{lemma}
    For each polyhedron $\C$, integer $\LargestC\geq 0$ and clock variables $\clock$ and $\clock'$, we have $\Ext{\LargestC}{\clock}\big(\Ext{\LargestC}{\clock'}(\C)\big)=\Ext{\LargestC}{\clock'}\big(\Ext{\LargestC}{\clock}(\C)\big)$.
    \label{lemma:extxy}
\end{lemma}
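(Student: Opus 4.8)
The statement is that the single-clock extrapolation operators commute. The plan is to unfold both sides using \cref{def:extx} and show that the resulting sets coincide. Write $A = \Ext{\LargestC}{\clock}\big(\Ext{\LargestC}{\clock'}(\C)\big)$ and $B = \Ext{\LargestC}{\clock'}\big(\Ext{\LargestC}{\clock}(\C)\big)$, and note that by symmetry in the roles of $\clock$ and $\clock'$ it suffices to establish one inclusion, say $A \subseteq B$ — actually, since the whole construction is symmetric under swapping $\clock \leftrightarrow \clock'$, proving $A = B$ reduces to a single symmetric computation.

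The key idea is to split each of $A$ and $B$ into four pieces according to whether the $\clock$-coordinate is $\leq \LargestC$ or $> \LargestC$, and similarly for the $\clock'$-coordinate. Concretely, I would first record the easy observation that for a clock $\clock$ and constant $\LargestC$, the operation $\Ext{\LargestC}{\clock}$ acts "locally" on the two half-spaces $\clock \leq \LargestC$ and $\clock > \LargestC$: intersecting $\Ext{\LargestC}{\clock}(\D)$ with $\clock \leq \LargestC$ gives $\D \cap (\clock \leq \LargestC)$, and intersecting with $\clock > \LargestC$ gives $\cylinder{\clock}(\D \cap (\clock > \LargestC)) \cap (\clock > \LargestC)$. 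The essential structural fact is that the two operations $\cylinder{\clock}(\cdot)\cap(\clock>\LargestC)$ and $\cylinder{\clock'}(\cdot)\cap(\clock'>\LargestC)$ commute with each other (cylindrification along distinct variables commutes, and the half-space constraints involve disjoint variables), and that $\cylinder{\clock}$ commutes with intersection by a constraint not mentioning $\clock$ (in particular with $\clock' \leq \LargestC$ or $\clock' > \LargestC$). Using these, I would compute:
\begin{itemize}
\item the $(\clock \leq \LargestC,\ \clock' \leq \LargestC)$-part of both $A$ and $B$ equals $\C \cap (\clock \leq \LargestC) \cap (\clock' \leq \LargestC)$;
\item the $(\clock > \LargestC,\ \clock' \leq \LargestC)$-part of both equals $\cylinder{\clock}\big(\C \cap (\clock > \LargestC) \cap (\clock' \leq \LargestC)\big) \cap (\clock > \LargestC)$ (for $B$ this uses that $\cylinder{\clock}$ passes through the intersection with $\clock' \leq \LargestC$);
\item symmetrically for the $(\clock \leq \LargestC,\ \clock' > \LargestC)$-part;
\item the $(\clock > \LargestC,\ \clock' > \LargestC)$-part of both equals $\cylinder{\clock}\cylinder{\clock'}\big(\C \cap (\clock > \LargestC) \cap (\clock' > \LargestC)\big) \cap (\clock > \LargestC) \cap (\clock' > \LargestC)$, using commutativity of the two cylindrifications.
\end{itemize}
Since $A$ (resp.\ $B$) is the union of its four parts and the corresponding parts agree, $A = B$.

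The only mildly delicate point — and where I would be careful — is the bookkeeping of how $\cylinder{\clock}$ interacts with the constant-constraint intersections in the "mixed" pieces: one must check that $\cylinder{\clock}\big(\D \cap (\clock' \bowtie \LargestC)\big) = \cylinder{\clock}(\D) \cap (\clock' \bowtie \LargestC)$ whenever $\clock \neq \clock'$, which is immediate from the set-theoretic definition of cylindrification since unconstraining $\clock$ does not touch the $\clock'$-coordinate. Everything else is routine set algebra. No deep argument is needed; the lemma is essentially a sanity check that the per-clock extrapolation is well behaved, so the "main obstacle" is merely organizing the four-way case split cleanly rather than any genuine difficulty.
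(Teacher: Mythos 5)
Your proposal is correct and matches the paper's own argument: the paper's proof consists precisely of the two facts you isolate, namely that cylindrifications along distinct clocks commute and that $\cylinder{\clock}$ commutes with intersection by a half-space $\clock' \bowtie \LargestC$ for $\clock' \neq \clock$. Your four-quadrant case split is just a more explicit write-up of how those two facts combine, so the approaches are essentially identical.
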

\begin{proof}%
	The result comes from the following facts:
	\begin{enumerate}
		\item $\cylinder{\clock}\big(\cylinder{\clock'}(\C)\big) = \cylinder{\clock'}\big(\cylinder{\clock}(\C)\big)$;
		\item for $\clock \neq \clock', \cylinder{\clock}(\C) \cap \valuations{\clock' \bowtie \LargestC} = \cylinder{\clock}\big(\C \cap \valuations{\clock' \bowtie \LargestC}\big)$ for ${\bowtie} \in \{ <,\leq,\geq,> \}$.
		\qedhere
	\end{enumerate}
\end{proof}

We can now consistently define the $(\LargestC,\Clock)$-extrapolation operator over a \emph{set} of clocks:

\begin{definition}[$(\LargestC,\Clock)$-extrapolation]\label{definition:MX-extrapolation}
    Let $\LargestC$ be a non-negative integer constant and $\Clock$ be a set of clocks.
    The $(\LargestC,\Clock)$-extrapolation operator $\Ext{\LargestC}{\Clock}$ is defined as the composition (in any order) of all $\Ext{\LargestC}{\clock}$, for all $\clock\in\Clock$.
\end{definition}
\begin{remark}\label{remark:size-ext}
	Observe that, given a convex polyhedron~$\C$, the size of $\Ext{\LargestC}{\Clock}(\C)$ is in the worst case a disjunction of $2^{|\Clock|}$ polyhedra.
\end{remark}

When clear from the context we omit $\Clock$ and only write $\LargestC$-extrapolation or $\Ext{\LargestC}{}$.

In the rest of this section, we first prove our results on $\Ext{\LargestC}{\clock}$.
It is then straightforward to adapt them to $\Ext{\LargestC}{\Clock}$ using \cref{lemma:extxy}.

\subsubsection{Properties of extrapolation}\label{sss:extrapolation-results}
Crucially, extrapolation preserves the projection onto~$\Param$:

\begin{lemma}\label{lemma:extproj}
	Let $\C$ be a constraint over $\Clock \cup \Param$.
	Then $\projectP{\C}=\projectP{\Ext{\LargestC}{\clock}(\C)}$.
\end{lemma}

\begin{proof}
	First notice that $\C\subseteq \Ext{\LargestC}{\clock}(\C)$, and therefore $\projectP{\C}\subseteq\projectP{\Ext{\LargestC}{\clock}(\C)}$. 
	Second, $\Ext{\LargestC}{\clock}$ only adds points with new clock valuations, without modifying parameter valuations.
	Indeed, the left-hand term only constrains~$\C$ further and therefore does not add parameter valuations.
	The right-hand part first constrains~$\C$, then unconstrains~$\clock$ (which does not add parameter valuations), and constrains~$\C$ again: so no new parameter valuations can be added.\label{newtext:lemma:extproj-noadd}
	By definition of $\Ext{\LargestC}{\clock}$, for any valuation in $\Ext{\LargestC}{\clock}(\C)$, there exists a valuation in $\C$ with the same projection on parameters. So, $\projectP{\C}\supseteq\projectP{\Ext{\LargestC}{\clock}(\C)}$.
\end{proof}

\begin{lemma}\label{lemma:ALR15-lemma4}
	For each parameter valuation $\pval$, non-negative integer constant $\LargestC$, clock~$\clock$ and valuations set~$\C$,
		$\valuate{\Ext{\LargestC}{\clock}(\C)}{\pval} = \Ext{\LargestC}{\clock}(\valuate{\C}{\pval})$.
   \label{lemma:extv}
\end{lemma}

\begin{proof}
    It is easy to prove, just by writing their definitions, that $\valuate{\C\cap\C'}{\pval}=\valuate{\C}{\pval}\cap\valuate{\C'}{\pval}$ and $\cylinder{\clock}(\valuate{\C}{\pval})=\valuate{\cylinder{\clock}(\C)}{\pval}$, and the result follows.
\end{proof}

We will also need the following result:
\begin{lemma}
For all edges $\edge$ and sets of valuations $\C_1$ and $\C_2$, we have: if $\Ext{\LargestC}{\clock}(\C_1) = \Ext{\LargestC}{\clock}(\C_2)$ then 
$\Ext{\LargestC}{\clock}(\Succ(\C_1,\edge)) = \Ext{\LargestC}{\clock}(\Succ(\C_2,\edge))$.
	\label{lemma:extsucc}
\end{lemma}

\begin{proof}
	Assume $\Ext{\LargestC}{\clock}(\C_1) = \Ext{\LargestC}{\clock}(\C_2)$
	and consider $\wv{\clockval_1'}{\pval}\in\Succ(\C_1,\edge)$.
	Then there exists $\wv{\clockval_1}{\pval}\in \C_1$ such that
	$\wv{\clockval_1'}{\pval}\in \Succ(\{\wv{\clockval_1}{\pval}\}, \edge)$.
	
	If $\valuate{\clock}{\clockval_1} \leq \LargestC$, then $\wv{\clockval_1}{\pval}$ is also in $\C_2$ and 
	$\wv{\clockval'_1}{\pval}\in \Succ(\C_2, \edge)$.

	If $\valuate{\clock}{\clockval_1} > \LargestC$, then by definition of $\Ext{\LargestC}{\clock}$ and in particular of cylindrification, there exists some $\clockval_2\in\C_2$ such that $\forall \clock'\neq \clock, \valuate{\clock'}{\clockval_1} = \valuate{\clock'}{\clockval_2}$ and $\valuate{\clock}{\clockval_2} > M$.  

	Now, either $\clock$ is reset to zero along $\edge$, and then it is clear that $\Succ(\{\wv{\clockval_1}{\pval}\}, \edge) = \Succ(\{\wv{\clockval_2}{\pval}\}, \edge)$ and so $\wv{\clockval'_1}{\pval}\in \Succ(\C_2, \edge)$.	

	Or $\clock$ is not reset to zero along $\edge$.
	If we have lower bound constraints on $\clock$ in guards or invariants, those constraints were and are vacuously satisfied before and after taking $\edge$, both for $\clockval_1$ and $\clockval_2$.  And if we have any upper bound constraint on $\clock$ then the successor is empty, again for both valuations. If the edge can indeed be taken (the successor is not empty), then clock $\clock$ thus plays no role in determining the value of other clocks and itself will stay above $M$.

	It follows we still have $\clockval'_1(\clock) > M$ and there also exists $\clockval'_2\in\Succ(\{\wv{\clockval_2}{\pval}\}, \edge)$, such that  $\clockval'_2(\clock) > M$ and for all $\clock'\neq\clock,  \clockval'_2(\clock') = \clockval'_1(\clock')$. This means that $\wv{\clockval'_1}{\pval}\in \Ext{\LargestC}{\clock}(\Succ(\C_2, \edge))$.

	With this, we have proved the left to right inclusion. The right to left inclusion is completely symmetrical.
\end{proof}

It is clear that \cref{lemma:extproj,lemma:extv,lemma:extsucc} directly extend from $\Ext{\LargestC}{\clock}$ to $\Ext{\LargestC}{\Clock}$.

\subsection{Simulations}

For the preservation of behaviours, following~\cite{BBLP06}, we use a notion of simulation:
\begin{definition}[Simulation~\cite{BBLP06}]\label{definition:simulation-BBLP06}
    Let \mbox{$\A = (\Actions, \Loc, \locinit, \Clock, \invariant, \Edges)$} be a TA and $\preceq$ a relation on $L\times\setRplus^{\cardinality{\Clock}}$. Relation $\preceq$ is a (location-based) simulation if:
    \begin{itemize}
        \item if $(\loc_1,\clockval_1)\preceq (\loc_2,\clockval_2)$ then $\loc_1=\loc_2$,
        \item if $(\loc_1,\clockval_1)\preceq (\loc_2,\clockval_2)$ and $(\loc_1,\clockval_1)\fleche{\action}(\loc'_1,\clockval'_1)$, then there exists $(\loc'_2,\clockval'_2)$ such that $(\loc_2,\clockval_2)\fleche{\action}(\loc'_2,\clockval'_2)$ and $(\loc'_1,\clockval'_1)\preceq (\loc'_2,\clockval'_2)$,
        \item if $(\loc_1,\clockval_1)\preceq (\loc_2,\clockval_2)$ and $(\loc_1,\clockval_1)\fleche{d_1}(\loc_1,\clockval_1 + d_1)$, then there exists $d_2$ such that $(\loc_2,\clockval_2) \fleche{d_2} (\loc_2,\clockval_2 + d_2)$ and $(\loc_1,\clockval_1 + d_1)\preceq (\loc_2,\clockval_2 + d_2)$.
    \end{itemize}

    If $\preceq^{-1}$ is also a simulation relation then $\preceq$ is called a bisimulation.

    State $\state_1$ simulates $\state_2$ if there exists a simulation $\preceq$ such that $\state_2\preceq \state_1$. If $\preceq$ is a bisimulation, then the two states are said bisimilar.
\end{definition}

\begin{lemC}[{\cite[Lemma~1]{BBLP06}}]\label{lemma:bisim}
	Let $\LargestC$ be a non-negative integer constant greater than or equal to the maximum constant occurring in the simple clock constraints appearing in the guards and invariants of the TA.
	Let $\equiv_\LargestC$ be the relation defined as $\clockval\equiv_\LargestC \clockval'$ iff $\forall \clock\in \Clock$: either $\valuate{\clock}{\clockval}=\valuate{\clock}{\clockval'}$ or ($\valuate{\clock}{\clockval} > \LargestC$ and $\valuate{\clock}{\clockval'}>\LargestC$).
The relation $\mathcal{R} = \big \{ \big ((\loc, \clockval), (\loc, \clockval') \big ) \mid \clockval \equiv_\LargestC \clockval' \big \}$ is a bisimulation relation.

\end{lemC}

\subsubsection{Computing $\LargestC$ in bounded PTAs}

We use the bounds on parameters to compute the maximum constant~$\LargestC$ appearing in all the guards and invariants of the PTA.
When in a constraint a clock is compared to a parametric expression, we compute the maximum value of that expression over all the bounded parameter valuations. This value is finite and can be done by solving a linear program because the expression is linear and the domain of the parameters defines a polytope.
$\LargestC$ should be greater than all these valuations.

\begin{example}\label{example:computing-M}
	Consider a guard $\clock\leq 2\paramitext{1} -\paramitext{2}+1$ and $\paramitext{1}\in [2,5]$, and $\paramitext{2}\in [3,4]$; then the maximum constant corresponding to this constraint is $2\times 5-3+1=8$.
\end{example}

Also note that bounding the parameter domain of PTAs is not a strong restriction in practice---especially since the bounds can be arbitrarily large.\footnote{%
	Of course, the theoretical worst-case complexity of model checking depends exponentially on the size of the constants;
	however, appropriate abstractions such as the zone graph~\cite{BY03} are well-known to potentially alleviate this problem in practice. %
}

\cref{lemma:extbisim,lemma:extintv2} are instrumental in proving the preservation of all correct integer parameter valuations in the algorithms of \cref{section:algos}.

\begin{lemma}\label{lemma:extbisim}
    Let~$\A$ be a bounded PTA, $\symbstate$ be a symbolic state of~$\A$, and $\LargestC$ a non-negative integer constant greater than the maximal constant occurring in the clock constraints of the PTA for all possible valuations of the parameters.
    Let $\clock$ be a clock, $\pval$ be a parameter valuation, and $(\loc, \clockval) \in \valuate{\Ext{\LargestC}{\clock}(\symbstate)}{\pval}$ be a concrete state.
    There exists a state $(\loc, \clockval') \in \valuate{\symbstate}{\pval}$ such that $(\loc, \clockval)$ and $(\loc, \clockval')$ are bisimilar.
\end{lemma}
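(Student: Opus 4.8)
The goal is to show that every concrete state in $\valuate{\Ext{\LargestC}{\clock}(\symbstate)}{\pval}$ is bisimilar to some concrete state in $\valuate{\symbstate}{\pval}$. My plan is to combine the two lemmas already at hand: \cref{lemma:extv}, which tells us that valuating commutes with extrapolation (so $\valuate{\Ext{\LargestC}{\clock}(\symbstate)}{\pval} = \Ext{\LargestC}{\clock}(\valuate{\symbstate}{\pval})$), and \cref{lemma:bisim}, which characterises the bisimulation $\equiv_\LargestC$ on a TA whose constants are bounded by $\LargestC$. The hypothesis on $\LargestC$ is exactly what is needed so that, in the timed automaton $\valuate{\A}{\pval}$ (whose concrete constants are all $\le \LargestC$ by the construction described before the lemma), the relation $\mathcal R$ of \cref{lemma:bisim} is a bisimulation.

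**Key steps, in order.** First, rewrite the membership hypothesis using \cref{lemma:extv}: $(\loc,\clockval) \in \Ext{\LargestC}{\clock}\big(\valuate{\symbstate}{\pval}\big)$, where $\valuate{\symbstate}{\pval} = (\loc, \valuate{\C}{\pval})$. Second, unfold \cref{def:extx}: either $\clockval \in \valuate{\C}{\pval} \cap (\clock \le \LargestC)$, in which case we take $\clockval' = \clockval$ and the two states are trivially bisimilar (any state is bisimilar to itself, and $\clockval \equiv_\LargestC \clockval$); or $\clockval \in \cylinder{\clock}\big(\valuate{\C}{\pval} \cap (\clock > \LargestC)\big) \cap (\clock > \LargestC)$. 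In the latter case, by definition of cylindrification along $\clock$, there is a valuation $\clockval'$ agreeing with $\clockval$ on every clock other than $\clock$, satisfying $\clockval' \in \valuate{\C}{\pval}$ and $\clockval'(\clock) > \LargestC$. Third, observe that $\clockval(\clock) > \LargestC$ and $\clockval'(\clock) > \LargestC$ while they agree elsewhere, so $\clockval \equiv_\LargestC \clockval'$. Fourth, invoke \cref{lemma:bisim} on the timed automaton $\valuate{\A}{\pval}$: since $\LargestC$ is at least the maximal constant of $\valuate{\A}{\pval}$, the relation $\mathcal R$ is a bisimulation, and $(\loc,\clockval)\,\mathcal R\,(\loc,\clockval')$; hence the two concrete states are bisimilar. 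Finally, note $\clockval' \in \valuate{\C}{\pval}$ means $(\loc,\clockval') \in \valuate{\symbstate}{\pval}$, which is what we wanted. (One should also check that $(\loc,\clockval')$ is genuinely a state of the TA, i.e.\ $\clockval' \models \valuate{\invariant(\loc)}{\pval}$; this follows because $\C \subseteq \invariant(\loc)$ for symbolic states, so $\valuate{\C}{\pval}\subseteq \valuate{\invariant(\loc)}{\pval}$.)

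**Main obstacle.** There is no deep difficulty here; the lemma is essentially a bookkeeping exercise, and the only point requiring a little care is the interface with \cref{lemma:bisim}: that lemma is stated for a TA, so one must be explicit that "the maximum constant occurring in the simple clock constraints" of $\valuate{\A}{\pval}$ is bounded by $\LargestC$ — precisely the content of the paragraph "Computing $\LargestC$ in bounded PTAs", where $\LargestC$ is chosen larger than the maximum over all bounded parameter valuations of every parametric expression appearing in a guard or invariant. Once that is spelled out, the rest is immediate. A secondary subtlety is the treatment of the (possibly non-closed) polyhedron: the strict/non-strict distinction in the cylindrification and in the split at $\clock = \LargestC$ does not affect the argument, since $\equiv_\LargestC$ only cares about whether a coordinate exceeds $\LargestC$, not about the exact value, so I would not dwell on it. Finally, extending from $\Ext{\LargestC}{\clock}$ to $\Ext{\LargestC}{\Clock}$ is routine via \cref{lemma:extxy} and a straightforward induction on the clocks, composing the bisimilarities, but the statement as given is for a single clock so that extension is not strictly needed here.
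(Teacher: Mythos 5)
Your proof is correct and follows essentially the same route as the paper's: a case split on whether the point lies in the original constraint or in the cylindrified part, recovery of a matching valuation $\clockval'$ from the definition of $\cylinder{\clock}$, the observation that $\clockval \equiv_\LargestC \clockval'$, and an appeal to \cref{lemma:bisim}. Your explicit use of \cref{lemma:extv} and the remarks on the invariant and on why $\LargestC$ bounds the constants of $\valuate{\A}{\pval}$ are harmless elaborations of steps the paper leaves implicit.
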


\begin{proof}
    If $(\loc, \wv{\clockval}{\pval}) \in \symbstate$, then the results holds trivially.
	Otherwise, it means that there exists some clock $\clock$ such that 
    $(\loc, \wv{\clockval}{\pval}) \in \cylinder{\clock}(\symbstate\cap \valuations{\clock>\LargestC})\cap\valuations{\clock>\LargestC}$. This implies that
		$\valuate{\symbstate\cap \valuations{\clock>\LargestC}}{\pval}\neq \emptyset$ and $\valuate{\clock}{\clockval}>\LargestC$.
        Therefore, and using the definition of $\cylinder{\clock}$, there exists $(\loc, \wv{\clockval'}{\pval}) \in \symbstate\cap \valuations{\clock>\LargestC}$ such that for each $\clock'\neq \clock,\valuate{\clock'}{\clockval'}=\valuate{\clock'}{\clockval}$. We also have $\valuate{\clock}{\clockval'}>\LargestC$, which means that $\clockval'\equiv_\LargestC \clockval$ and by \cref{lemma:bisim}, we obtain the expected result.
\end{proof}

Again, \cref{lemma:extbisim} directly extends from $\Ext{\LargestC}{\clock}$ to $\Ext{\LargestC}{\Clock}$.
The following lemma will be subsequently used in the proof of soundness of our synthesis algorithm (\cref{theorem:RIEF:soundness}).

\begin{lemma}\label{lemma:extbisimeq}
    For all symbolic states $\symbstate$ and $\symbstate'$, non-negative integer constant $\LargestC$ greater than the maximal constant occurring in the PTA (including the bounds of parameters), and parameter valuations $\pval$, such that $\valuate{\Ext{\LargestC}{\Clock}(\symbstate)}{\pval}=\valuate{\Ext{\LargestC}{\Clock}(\symbstate')}{\pval}$, for all states $(\loc, \clockval) \in \valuate{\symbstate}{\pval}$, there exists a state $(\loc, \clockval') \in \valuate{\symbstate'}{\pval}$ such that $(\loc, \clockval)$ and $(\loc, \clockval')$ are bisimilar.
\end{lemma}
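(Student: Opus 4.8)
The plan is to combine \cref{lemma:extbisim} with \cref{lemma:extv} and the fact that extrapolation is monotone and produces equivalent states. Let $(\loc,\clockval)\in\valuate{\symbstate}{\pval}$. Since $\symbstate\subseteq\Ext{\LargestC}{\Clock}(\symbstate)$, we have $(\loc,\clockval)\in\valuate{\Ext{\LargestC}{\Clock}(\symbstate)}{\pval}$, and by hypothesis $(\loc,\clockval)\in\valuate{\Ext{\LargestC}{\Clock}(\symbstate')}{\pval}$. Now I would apply \cref{lemma:extbisim} (in its $\Ext{\LargestC}{\Clock}$ form, as noted just after its proof) to the symbolic state $\symbstate'$: there exists $(\loc,\clockval')\in\valuate{\symbstate'}{\pval}$ such that $(\loc,\clockval)$ and $(\loc,\clockval')$ are bisimilar. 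That is exactly the desired conclusion.

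The one subtlety I would spell out carefully is the invocation of \cref{lemma:extv} to move between $\valuate{\Ext{\LargestC}{\Clock}(\symbstate')}{\pval}$ and $\Ext{\LargestC}{\Clock}(\valuate{\symbstate'}{\pval})$, since \cref{lemma:extbisim} is phrased in terms of the concrete state lying in $\valuate{\Ext{\LargestC}{\clock}(\symbstate)}{\pval}$; \cref{lemma:extv} guarantees these coincide, so the application is legitimate. I would also note that the location component is preserved throughout: extrapolation acts only on the constraint, not on~$\loc$, and bisimilarity forces equality of locations by definition, so writing all states with the common first component $\loc$ is consistent.

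The main (and only real) obstacle is bookkeeping rather than mathematics: making sure the constant $\LargestC$ used here satisfies the hypothesis of \cref{lemma:extbisim} — it must exceed the maximal constant occurring in the PTA including the parameter bounds, which is precisely what is assumed in the statement — and making sure the $\Ext{\LargestC}{\clock}$-to-$\Ext{\LargestC}{\Clock}$ lifting of \cref{lemma:extbisim} is invoked rather than the single-clock version. Beyond that, the proof is a two-line chase: inclusion $\symbstate\subseteq\Ext{\LargestC}{\Clock}(\symbstate)$, the equality hypothesis, then \cref{lemma:extbisim}. No induction, no new estimate, and no case analysis beyond what is already buried inside \cref{lemma:extbisim}.
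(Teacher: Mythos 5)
Your proof is correct and follows exactly the route the paper takes: the paper's own proof is the one-liner ``direct consequence of \cref{lemma:extbisim,lemma:extv}'', and your chain (inclusion $\symbstate\subseteq\Ext{\LargestC}{\Clock}(\symbstate)$, the equality hypothesis, then \cref{lemma:extbisim} applied to $\symbstate'$, with \cref{lemma:extv} justifying the interchange of valuation and extrapolation) is precisely the intended unpacking of that remark.
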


\begin{proof}
    This is a direct consequence of \cref{lemma:extbisim,lemma:extv}. %
\end{proof}

\subsection{Extrapolation and integer hulls}

Here, for the sake of simplicity, and similarly to~\cite{JLR15}, we consider that all polyhedra are topologically closed and, to avoid confusion, we equivalently (provided that $\LargestC$ is (strictly) greater than the maximal constant in the PTA) define $\Ext{\LargestC}{\clock}(\symbstate)$ as $(\symbstate\cap\valuations{\clock\leq \LargestC})\cup\cylinder{\clock}(\symbstate\cap\valuations{\clock\geq \LargestC})\cap\valuations{\clock\geq \LargestC}$.

\begin{lemma}\label{lemma:extintv}
    For any integer parameter valuation~$\pval$, any non-negative integer constant~$\LargestC$, and any symbolic state $\symbstate=(\loc, \C)$, $\valuate{\Ext{\LargestC}{\clock}(\C)}{\pval}$ is its own integer hull.
\end{lemma}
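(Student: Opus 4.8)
The goal is to show that for a reachable symbolic state $\symbstate = (\loc, \C)$ and an integer parameter valuation $\pval$, the polyhedron $\valuate{\Ext{\LargestC}{\clock}(\C)}{\pval}$ equals its own integer hull. The plan is to combine two ingredients: (i) the fact, established in~\cite{JLR15}, that for reachable symbolic states $\symbstate = (\loc, \C)$ and integer parameter valuations $\pval$, the valuated zone $\valuate{\C}{\pval}$ is itself a zone whose defining constraints have integer coefficients, hence is its own integer hull (zones defined by difference and bound constraints with integer constants are integer-hull-closed, as noted in~\cite{JLR15}); and (ii) the fact that the $(\LargestC,\clock)$-extrapolation, when applied with an \emph{integer} constant $\LargestC$ to such an integer zone, again produces an integer zone. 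First I would invoke \cref{lemma:extv} to rewrite $\valuate{\Ext{\LargestC}{\clock}(\C)}{\pval}$ as $\Ext{\LargestC}{\clock}(\valuate{\C}{\pval})$, reducing the statement to a purely non-parametric claim about extrapolation of an integer zone.

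Next I would analyze the two pieces of $\Ext{\LargestC}{\clock}(\valuate{\C}{\pval}) = \big(\valuate{\C}{\pval}\cap(\clock\le\LargestC)\big) \cup \cylinder{\clock}\big(\valuate{\C}{\pval}\cap(\clock\ge\LargestC)\big)\cap(\clock\ge\LargestC)$ separately. Write $Z = \valuate{\C}{\pval}$, which is a zone with integer constants. The first piece $Z\cap(\clock\le\LargestC)$ is an intersection of the integer zone $Z$ with the integer half-space $\clock\le\LargestC$, so it is again an integer zone, and thus its own integer hull. For the second piece, cylindrification $\cylinder{\clock}$ along $\clock$ simply drops all constraints mentioning $\clock$ (and keeps $\clock\ge 0$), so $\cylinder{\clock}(Z\cap(\clock\ge\LargestC))$ is the projection of an integer zone — still an integer zone — and intersecting with the integer half-space $\clock\ge\LargestC$ keeps it an integer zone. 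Here I should use the closedness convention adopted just before the lemma (the equivalent definition of $\Ext{\LargestC}{\clock}$ using $\clock\ge\LargestC$), so that both halves are topologically closed and the integer-hull notion is the one from the paper. So each of the two polyhedra in the union is its own integer hull.

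The remaining and genuinely delicate point — which I expect to be the main obstacle — is that the \emph{union} of two integer-hull-closed polyhedra need not in general be its own integer hull, since new rational vertices can appear where the two pieces meet. The key observation that rescues us is that the two pieces share the common boundary facet $\clock = \LargestC$ (an integer hyperplane, since $\LargestC\in\grandn$), and their union is in fact \emph{convex}: $\Ext{\LargestC}{\clock}(Z)$ is exactly the polyhedron obtained from $Z$ by relaxing every upper bound on $\clock$ that exceeds $\LargestC$ and every difference constraint $\clock - \clock' \le c$ to $\clock-\clock'$ unbounded, i.e.\ it is still a single zone with integer constants. I would verify this by a direct constraint-level computation: describe $Z$ by its canonical difference-bound constraints $\clock_i - \clock_j \bowtie c_{ij}$ with $c_{ij}\in\grandz$, and check that $\Ext{\LargestC}{\clock}(Z)$ is obtained by keeping all constraints not involving $\clock$, keeping $\clock \ge 0$, keeping $\clock \le c$ only when $c \le \LargestC$ (replacing it by $\clock\le\LargestC$-style reasoning is unnecessary since such constraints already imply $\clock\le\LargestC$ is not binding) — more precisely, since $\LargestC$ strictly exceeds the maximal constant of the PTA and the zone is reachable, any upper constraint on $\clock$ in $Z$ either has value $\le\LargestC$ (and is untouched) or the constraint $\clock\le\LargestC$ cuts nothing new. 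Once $\Ext{\LargestC}{\clock}(Z)$ is exhibited as a single integer zone, it is its own integer hull by the \cite{JLR15} fact, and undoing the substitution via \cref{lemma:extv} completes the proof. (If one prefers to avoid the full canonical-form argument, an alternative is to prove convexity of the union abstractly — any point on a segment between a point with $\clock\le\LargestC$ and a point with $\clock\ge\LargestC$ in the cylindrified part lies in one of the two pieces — and then note that a convex polyhedron expressed as a union of two integer polyhedra glued along an integer facet has all its vertices among the vertices of the two pieces, hence integer.)
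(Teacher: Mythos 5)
Your opening steps coincide with the paper's proof: reduce to $\Ext{\LargestC}{\clock}(\valuate{\C}{\pval})$ via \cref{lemma:extv}, note that $\valuate{\C}{\pval}$ is a zone with integer coefficients because $\symbstate$ is reachable and $\pval$ is an integer valuation, and conclude that each of the two pieces of the extrapolation is again an integer zone, hence its own integer hull (by the fact from~\cite{JLR15} that integer zones have integer vertices). The gap is in how you then handle the union. You claim that $\Ext{\LargestC}{\clock}(Z)$ is convex --- ``still a single zone with integer constants'', obtained from $Z$ by relaxing upper bounds on $\clock$, i.e.\ essentially the classical DBM extrapolation. This is false in general: the operator $\Ext{\LargestC}{\clock}$ of this paper is \emph{not} the classical extrapolation, because $\cylinder{\clock}$ drops \emph{all} constraints involving $\clock$, in particular the diagonal constraints $\clock - \clock' \bowtie c$, while the piece $\C\cap(\clock\le\LargestC)$ keeps them; and in a reachable zone those diagonal constants are not bounded by $\LargestC$. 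Concretely, take $Z=\{\,y-\clock=5,\ \clock\ge 0\,\}$ (reachable in a PTA whose maximal constant is $1$, by resetting $\clock$ five times at $\clock=1$) and $\LargestC=1$: the first piece is the segment $\{y=\clock+5,\ 0\le\clock\le 1\}$ and the second is $\{y\ge 6,\ \clock\ge 1\}$ (closed variant). The midpoint of $(0,5)$ and $(10,6.5)$ is $(5,5.75)$, which lies in neither piece. So the union is genuinely non-convex, and your fallback convexity argument (``any point on a segment between the two pieces lies in one of them'') fails as well.

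The reason the paper needs none of this is definitional: integer hulls of finite unions of convex polyhedra are defined component-wise in this paper, $\IH(\bigcup_i \C_i)=\bigcup_i\IH(\C_i)$. Under that convention, ``$\Ext{\LargestC}{\clock}(\valuate{\C}{\pval})$ is its own integer hull'' means exactly that each of the two convex disjuncts is its own integer hull --- which you had already established before introducing the convexity claim. The ``genuinely delicate point'' you flag (rational vertices appearing where the pieces meet) simply does not arise. Deleting the convexity argument and invoking the component-wise definition of $\IH$ for unions turns your attempt into the paper's proof.
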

	
\begin{proof}
	For any reachable symbolic state $(\loc,\C)$ of a PTA $\A$, and for any integer parameter valuation~$\pval$, we have by~\cite[Property~3]{JLR15} that
	$\valuate{\pval}{\A}$ is a \emph{zone}~\cite{BY03}: a convex polyhedron defined only by inequalities of the form $\variable \bowtie d$ or $\variable -\variable'\bowtie d$, with $\variable,\variable'$ (clock) variables, $d\in \setZ$, and ${\bowtie} \in \{<,\leq,\geq,>\}$.
	Still by~\cite[Property~3]{JLR15}, zones also have integer vertices and are their own integer hulls.
	
    From \cref{lemma:extv}, $\valuate{\Ext{\LargestC}{\clock}(\C)}{\pval} = \Ext{\LargestC}{\clock}(\valuate{\C}{\pval})$. 
	Now, $\valuations{\clock \geq \LargestC}$ and $\valuations{\clock\leq \LargestC}$ are zones too. Furthermore, cylindrification on $\clock$ preserves zones with integer coefficients too: it consists in
    removing any upper bound constraint on $\clock$ ($\clock\leq U$%
    , with $U$ an integer) from the zone.
    Thus $\Ext{\LargestC}{\clock}(\valuate{\C}{\pval})$ is a union of two zones, and each of them is then its own integer hull.
    The result then follows from the fact that the integer hull of a union of polyhedra is defined as the union of the integer hulls of those polyhedra.
\end{proof}

\begin{lemma}\label{lemma:extintv2}
	 For any integer parameter valuation~$\pval$, any non-negative integer constant~$\LargestC$, and any symbolic state $\symbstate=(\loc, \C)$, $\valuate{\IH(\Ext{\LargestC}{\Clock}(\C))}{\pval}=\valuate{\Ext{\LargestC}{\Clock}(\C)}{\pval}$.
\end{lemma}

\begin{proof}
    Since we take the integer hull on each convex parts, it certainly holds that
    $\IH(\Ext{\LargestC}{\Clock}(\symbstate))\subseteq\Ext{\LargestC}{\Clock}(\symbstate)$ and so
    $\valuate{\IH(\Ext{\LargestC}{\Clock}(\C))}{\pval}\subseteq\valuate{\Ext{\LargestC}{\Clock}(\C)}{\pval}$.

    In the other direction, using \cref{lemma:extintv}, $\valuate{\Ext{\LargestC}{\Clock}(\C)}{\pval}=\IH(\valuate{\Ext{\LargestC}{\Clock}(\C)}{\pval})$. Now, if $\clockval\in\IH(\valuate{\Ext{\LargestC}{\Clock}(\C)}{\pval})$, then $\wv{\clockval}{\pval}\in\IH(\Ext{\LargestC}{\Clock}(\C))$, \ie{} $\clockval\in \valuate{\IH(\Ext{\LargestC}{\Clock}(\C)}{\pval}$. %
\end{proof}

The following \cref{lemma:finite} is the key to proving the termination of the algorithms of \cref{section:algos}.

\begin{proposition}\label{lemma:finite}
	In a bounded PTA, the set of constraints $\IH(\Ext{\LargestC}{\Clock}(\C))$ over the set of symbolic reachable states $(\loc, \C)$ is finite.
\end{proposition}
\begin{proof}
    In each disjunct of $\Ext{\LargestC}{\Clock}(\C)$,
		either clocks are upper bounded by $\LargestC$, or (due to the cylindrification) are lower bounded by~$\LargestC$.
    Therefore, vertices of these disjuncts all have coordinates less or equal to~$\LargestC$.
    When taking the integer hull of all these disjuncts separately we obtain a finite union of polyhedra with integer vertices with coordinates less or equal to~$\LargestC$ (and a finite set of extremal rays, taken in the finite set of the directions of clock variables), of which there can be only finitely many. %
\end{proof}

\section{Integer-complete dense parametric algorithms}\label{section:algos}
In this section, we describe three parameter synthesis algorithms:
\begin{enumerate}
	\item \RIEF{}, computing parameter valuations for which a location is reachable (\cref{ss:RIEF});
	\item \RIAF{}, computing parameter valuations for which a location is unavoidable (\cref{ss:RIAF}); and
	\item \RITPS{}, computing parameter valuations for which the trace set is identical to that of a reference valuation (\cref{ss:IM}).
\end{enumerate}

These algorithms always terminate for \emph{bounded} PTAs, and return not only all the integer valuations solution of the problem (à la~\cite{JLR15}) but also some rational valuations.
In fact, contrarily to~\cite{JLR15} in which returned constraints could only be interpreted over \emph{integer} valuations, all rational valuations in the constraints returned by our algorithms are correct solutions.\label{newtext:integer-rational-valuations}

\paragraph{Principle of the three algorithms}\label{paragraph:principle}
All three algorithms (\RIEF{}, \RIAF{} and \RITPS{}) explore the standard parametric symbolic state graph (without integer hull nor extrapolation); integer hull and extrapolation are only used in the $\Passed$ set of visited states, to ensure termination.
This is in contrast with most algorithms for timed automata, which work on extrapolated zone graphs.

In order to prove the soundness and completeness of the algorithms, we inductively define, as in~\cite{JLR15}, the \emph{symbolic reachability tree} of~$\A$ as the possibly infinite directed labelled tree $\symtree$ such that:
\begin{itemize}
    \item the root of~$\symtree$ is labelled by $\Init(\A)$;
    \item for every node $n$ of~$\symtree$, if $n$ is labelled by some symbolic state $\symbstate$, then for all edges~$\edge$ of~$\A$, there exists a unique child $n'$ of~$n$ labelled by $\Succ(\symbstate, \edge)$ iff $\Succ(\symbstate, \edge) \neq \emptyset$.
\end{itemize}

\begin{remark}
	\cref{def:boundedPTA} defines bounded PTAs as PTAs in which the parameters range in a hyperrectangle.
	However, all results in this section can be extended to a more permissive definition of bounded PTAs.
	In fact, in this section, we could consider that a PTA is bounded if its parameter domain is bounded.
    This includes hyperrectangles, but also more free ``shapes'', including non-convex or non-connected parameter domains, provided they can be expressed as a finite union of convex polyhedra.
\end{remark}
\subsection{Parametric reachability: \RIEF}\label{ss:RIEF}
\begin{algorithm}[tb!]
	\SetKwInOut{Input}{input}
	\SetKwInOut{Output}{output}

	\Input{A bounded PTA $\A$, a symbolic state $\symbstate=(\loc, \C)$, a set of target locations~$\LocsTarget$, a set~$\Passed$ of passed states on the current path}
	\Output{Set~$\K$ of parameter valuations guaranteeing reachability}

	\BlankLine

	\lIf{$\loc \in \LocsTarget$}{%
		$\K \assign \projectP{\C}$\nllabel{algo:RIEF:line2}
	}\Else{%
		$\K \assign \emptyset $\;
        \If{\hl{$\IH$}$\big($\hl{$\Ext{\LargestC}{\Clock}$}$(\symbstate) \big) \notin \Passed$}{%
			\ForEach{outgoing $\edge$ from $\loc$ in $\A$ s.t.\ $\Succ(\symbstate, \edge) \neq \emptyset$}{%
				$\symbstate' \assign \Succ(\symbstate, \edge)$\;

				$\K \assign \K \cup \RIEF\Big(\A, \symbstate', \LocsTarget, \Passed \cup \big\{ $\hl{$\IH$}$($\hl{$\Ext{\LargestC}{\Clock}$}$(\symbstate)) \big\}\Big)$\;
			}
		}
	}
	
	\Return{$\K$}
	\caption{$\RIEF(\A, \symbstate, \LocsTarget, \Passed)$}
	\label{algo:RIEF} %
\end{algorithm}

The goal of \RIEF{} given in \cref{algo:RIEF} (``R'' stands for robust, and ``I'' for integer hull) is to synthesize parameter valuations solution to the EF-synthesis problem, \ie{} the valuations for which there exists a run eventually reaching a location in~$\LocsTarget$.
It is inspired by the algorithms \EF{} and \IEF{} introduced in~\cite{JLR15} that both address the same problem; however \EF{} does not terminate in general, and \IEF{} can only output integer valuations.
In fact, if we replace all occurrences of $\IH(\C)$ in Algorithm $\RIEF$ by~$\C$, we obtain Algorithm $\EF$ from~\cite{JLR15}.
The differences with \IEF{} (which was defined in~\cite{JLR15}) are highlighted.

\RIEF{} is a recursive algorithm (initially called by $\RIEF(\A,\Init(\A),\LocsTarget,\emptyset)$) that proceeds as a post-order traversal of the symbolic reachability tree, and collects all parametric constraints associated with the target locations~$\LocsTarget$.

In contrast to \EF{}, \RIEF{} stores in the set~$\Passed$ of visited states the \emph{integer hulls} of (the extrapolation of) the symbolic states, which ensures termination due to the finite number of possible integer hulls of $\LargestC$-extrapolations;
however, in contrast to \IEF{}, \RIEF{} returns the actual states (instead of their integer hull), which yields a larger result than \IEF{}.

As a direct consequence of \cref{lemma:finite}, it is clear that \RIEF{} explores only a finite number a symbolic states.
Therefore, we have the following theorem:
\newcommand{\enonceTheoremRIEFtermination}{
    For any bounded PTA $\A$, the computation of $\RIEF(\A,\Init(\A),\LocsTarget,\emptyset)$ terminates.
}
\begin{theorem}\label{theorem:RIEF:termination}
	\enonceTheoremRIEFtermination{}
\end{theorem}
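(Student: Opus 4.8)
The plan is to show that the recursion tree explored by $\RIEF$ is finite, which immediately gives termination since each invocation does only finitely many operations (iterating over the finitely many outgoing edges of a location, plus polyhedral computations that are themselves terminating). Finiteness of the recursion tree follows from two observations: the tree is finitely branching, and it has no infinite path.

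Finite branching is immediate: at each call on a symbolic state $\symbstate=(\loc,\C)$, the algorithm recurses once per outgoing edge $\edge$ from $\loc$ in $\A$, and $\Edges$ is finite by \cref{def:PTA}. So each node of the recursion tree has at most $|\Edges|$ children. For the absence of an infinite path, I would argue by contradiction using the $\Passed$ set together with \cref{lemma:finite}. Suppose there is an infinite path $\symbstate_0 \Fleche{\edge_0} \symbstate_1 \Fleche{\edge_1} \cdots$ in the recursion. Along this path, the $\Passed$ argument grows monotonically: at depth $n$ the set $\Passed$ contains $\IH(\Ext{\LargestC}{\Clock}(\symbstate_i))$ for every $i<n$. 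Each $\symbstate_i$ is a reachable symbolic state of $\A$ (each is obtained from $\Init(\A)$ by repeated application of $\Succ$), so by \cref{lemma:finite} the set $\{ \IH(\Ext{\LargestC}{\Clock}(\symbstate_i)) \mid i \geq 0 \}$ is finite. Hence there exist $i<j$ with $\IH(\Ext{\LargestC}{\Clock}(\symbstate_i)) = \IH(\Ext{\LargestC}{\Clock}(\symbstate_j))$. But then when the call on $\symbstate_j$ is reached, $\IH(\Ext{\LargestC}{\Clock}(\symbstate_j)) = \IH(\Ext{\LargestC}{\Clock}(\symbstate_i))$ already lies in $\Passed$ (it was added when passing from the call on $\symbstate_i$ to that on $\symbstate_{i+1}$, and $\Passed$ only grows), so the guard $\IH(\Ext{\LargestC}{\Clock}(\symbstate_j)) \notin \Passed$ fails, no further recursive calls are made from $\symbstate_j$, and the path cannot continue past $\symbstate_j$ — contradicting its infiniteness.

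By König's lemma (or directly: a finitely branching tree with all branches finite is finite), the recursion tree is finite. Since each node performs a bounded amount of work — the $\loc\in\LocsTarget$ test, at most one projection $\projectP{\C}$, the computation of $\Ext{\LargestC}{\Clock}(\symbstate)$ and its integer hull, the membership test against $\Passed$, and finitely many unions $\K \cup \cdots$ — the whole computation of $\RIEF(\A,\Init(\A),\LocsTarget,\emptyset)$ terminates.

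The only real content here is \cref{lemma:finite}, which the statement explicitly flags as its basis and which I may assume; so the main (modest) obstacle is just being careful about the bookkeeping of the $\Passed$ set along a recursion path — namely that it accumulates the integer hulls of all proper ancestors, so that a repeated integer hull necessarily blocks continuation. Everything else is routine tree combinatorics.
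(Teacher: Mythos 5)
Your proof is correct and follows essentially the same route as the paper, which simply states termination as a direct consequence of \cref{lemma:finite} (finitely many integer hulls of extrapolations of reachable symbolic states, hence finitely many symbolic states explored). Your write-up merely makes explicit the bookkeeping the paper leaves implicit — that $\Passed$ accumulates the hulls of all ancestors along a recursion path, so a repeated hull blocks the branch, and König's lemma finishes the argument.
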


Algorithm $\RIEF$ is a post-order depth-first traversal of some prefix of the symbolic reachability tree.

\begin{theorem}\label{theorem:RIEF:soundness}
	Let $\K = \RIEF(\A,\Init(\A),\LocsTarget,\emptyset)$.
	We have:
	\begin{enumerate}
		\item Soundness: If $\pval \in \K$ then $\LocsTarget$ is reachable in $\valuate{\A}{\pval}$;
		\item Integer completeness: If $\pval$ is an integer parameter valuation and $\LocsTarget$ is reachable in $\valuate{\A}{\pval}$, then $\pval \in \K$.
	\end{enumerate}
\end{theorem}
\begin{proof}
Since \RIEF{} has terminated, it has explored a finite prefix $\treeprefix$ of~$\symtree$.
\begin{enumerate}
	\item Soundness: 
	Let $\symbstate$ be symbolic state in $\treeprefix$, and $\Passed$ be the set of ancestors of $\symbstate$ in $\treeprefix$. Further let $\Passed'$ be the set $\{(\loc,\IH(\Ext{\LargestC}{\Clock}(\C))\,\mid\, (\loc,\C)\in \Passed\}$.

	Then, by induction, we have $\RIEF(\A, \symbstate, \LocsTarget, \Passed')\subseteq \EF(\A, \symbstate, \LocsTarget, \Passed)$:
	\begin{itemize}
		\item if $\symbstate=(\loc,\C)$ is a leaf of~$\treeprefix$, then either $\loc\in\LocsTarget$ and both algorithms return the same value $\projectP{\C}$, or the result of \RIEF{} is empty and thus included in whatever is the result of \EF{}.
		\item if it is not a leaf, then by the induction hypothesis we have inclusion for all the recursive calls and also for the union of all of them.
    \end{itemize}

	\item Integer completeness: 
		The proof of this part follows the same general structure as that of \EF{} in~\cite{JLR15} but with additional complications due to the use of the extrapolation.
		For the sake of clarity, we rewrite it completely here, taking care of the modified convergence scheme.

		Let $\pval$ be an \emph{integer} parameter valuation such that there exists a run~$\varrun$ in $\valuate{\A}{\pval}$ that
		reaches $\LocsTarget$.  Then $\varrun$ is finite and its last state has
		a location belonging to~$\LocsTarget$.  Let $\edge_1,\ldots, \edge_p$
		be the edges taken in $\varrun$ and consider the branch $\Init(\A)=\symbstate_0\xrightarrow{\edge_1} \symbstate_1 \xrightarrow{\edge_2}\cdots\xrightarrow{\edge_p} \symbstate_p$ of the tree
		$\symtree$ obtained by following this edge sequence on the labels of the
		arcs in the tree. For $i\leq p$, let $\Passed_i = \{\IH(\Ext{\LargestC}{\Clock}(\symbstate_j))\,\mid\, 0\leq j < i\}$.  
		
		Assume first there exists no $k\leq p$ such that $\IH(\Ext{\LargestC}{\Clock}(\symbstate_k))\in\Passed_k$ then, by induction on $i\leq p$, we have $\pval$ in 
		$\RIEF(\A,\symbstate_i,\LocsTarget,\Passed_i)$:
		\begin{itemize}
			\item if $i = p$ then, noting $\symbstate_i=(\loc_i,\C_i)$, we have  $\loc_i\in\LocsTarget$ and by \cref{cor:next}, $\pval\in\projectP{\C_i}$ and thus in the result of \RIEF{} as expected.
			\item otherwise, for $0<i\leq p$, since  we do not have $s_{i-1}\in\Passed_{i-1}$, we must be in the recursive case, and thus $\RIEF(\A,\symbstate_{i-1},\LocsTarget,\Passed_{i-1})$ contains $\RIEF(\A,\symbstate_i,\LocsTarget,\Passed_i)$, as part of the big union on recursive calls, and by the induction hypothesis it contains $\pval$ as well.
		\end{itemize}
		Then, in particular, for $i=0$, we have $\pval\in \RIEF(\A,\Init(\A),\LocsTarget,\emptyset)$.

		Assume now there exists $k$ such that $\IH(\Ext{\LargestC}{\Clock}(\symbstate_k))\in\Passed_k$ and let $j<k$ be such that $\IH(\Ext{\LargestC}{\Clock}(\symbstate_j))=\IH(\Ext{\LargestC}{\Clock}(\symbstate_k))$.

		Since $\pval$ is an integer parameter valuation, by
		\cref{lemma:extintv2}, this means that
		$\valuate{\Ext{\LargestC}{\Clock}(\symbstate_k)}{\pval} = \valuate{\Ext{\LargestC}{\Clock}(\symbstate_j)}{\pval}$.
		Using now \cref{lemma:extbisimeq}, for any $(\loc_k,\clockval_k)\in\valuate{\symbstate_k}{\pval}$ there exists a state
		$(\loc_k, \clockval_j)\in\valuate{\symbstate_j}{\pval}$ that is bisimilar to it.
		
		Also, by \cref{cor:next}, $(\loc_k, \clockval_j)$ is reachable in
		$\valuate{\A}{\pval}$ via some run~$\varrun_1$ along edges
		$\edge_1\ldots \edge_j$.  Also, since
		$(\loc_k, \clockval_j)$ and $(\loc_k, \clockval_k)$ are bisimilar, there
		exists a run~$\varrun_2$ that takes the same edges as the suffix of
		$\varrun$ starting at $(\loc_k, \clockval_k)$.
		Let $\varrun'$ be the run
		obtained by merging $\varrun_1$ and $\varrun_2$ at $(\loc_k, \clockval_j)$.
		Run~$\varrun'$ has strictly less discrete actions than $\varrun$ and also
		reaches $\LocsTarget$. We can thus repeat the same reasoning as we
		have just done.  We can do this only a finite number of times
		(because the length of the considered run is strictly decreasing)
		so at some point we have to be in some of the other cases and we
		obtain the expected result.
		\qedhere
	\end{enumerate}
\end{proof}
\begin{example}

    Consider the simple PTA with a unique transition from the initial location $\loc_0$ to~$\loc_1$ with guard $1\leq x\leq 2\param$.
    To ensure the $EF\{\loc_1\}$ property, we just need to be able to go through the transition from~$\loc_0$ to~$\loc_1$. 
	To compute polyhedron~$C_1$ obtained in~$\loc_1$, we first delay in~$\loc_0$, obtaining $x\geq 0$. Then we intersect with the guard, obtaining $1\leq x\leq 2\param$ (note that this implies $1\leq 2\param$).
	By delaying again in~$\loc_1$, we finally obtain that $C_1$ is $1\leq x \wedge 1 \leq 2\param$, which implies $\param\geq \frac{1}{2}$.
	The integer hull of~$C_1$ is $1\leq x \wedge 1\leq \param$, which implies $\param\geq 1$.

	Algorithm $\IEF$ gives the result $\param\geq 1 \land \param \in \setN$, while algorithm $\RIEF$ gives (on this example) the exact result $a\geq \frac{1}{2}$.
    \label{ex:RIEF}
\end{example}
\subsection{Parametric unavoidability: \RIAF}\label{ss:RIAF}

\begin{algorithm}[tb]
	\SetKwInOut{Input}{input}
	\SetKwInOut{Output}{output}

	\Input{A bounded PTA $\A$, a symbolic state $\symbstate = (\loc, \C)$, a set of target locations~$\LocsTarget$, a set~$\Passed$ of passed states on the current path}
	\Output{Set~$\K$ of parameter valuations guaranteeing unavoidability}

	\BlankLine

	\lIf{$\loc \in \LocsTarget$}{%
		$\K\assign \projectP{\C}$
	}\Else{%
        \lIf{$\Big(\loc,$\hl{$\IH$}$\big($\hl{$\Ext{\LargestC}{\Clock}$}$(\C) \big) \Big)\in \Passed$}{%
			$\K \assign \emptyset$\nllabel{algo:RIAF:passed}
		}\Else{%
			$\K\assign \projectP{\C} $\ ; \  $\KLive \assign \emptyset $\;
			\ForEach{outgoing $\edge = (\loc,\guard,\action,\resets,\loc')$ from $\loc$ in $\A$ s.t.\ $\Succ(\symbstate, \edge) \neq \emptyset$}{%
				$\symbstate' \assign \Succ(\symbstate, \edge)$\;
				
                $\KGood{} \assign \RIAF\Big(\A, \symbstate', \LocsTarget, \Passed \cup \big\{\big(\loc,$\hl{$\IH$}$($\hl{$\Ext{\LargestC}{\Clock}$}$(\C))\big) \big\}\Big)$\;
				$\KBlock{}\assign \KTrue \setminus \projectP{\symbstate'}$\;
				$\K\assign \K\cap (\KGood{} \cup \KBlock{})$\;\nllabel{algo:RIAF:union}
				$\KLive \assign \KLive  \cup \timepast{(\C\cap \valuations{\guard})}$\;
			}

			$\K\assign \K \setminus \projectP{(\C \setminus \KLive )}$\;
		}
	}

	\Return{$\K$}
\caption{$\RIAF(\A, (\loc, \C), \LocsTarget, \Passed)$}\label{algo:RIAF} %
\end{algorithm}

The goal of \RIAF{} (given in \cref{algo:RIAF}) is to synthesize parameter valuations solution to the AF-synthesis problem.
It is inspired by the algorithms \AF{} and \IAF{} introduced in~\cite{JLR15}; however \AF{} may not terminate, and \IAF{} can only output integer valuations.
Note also, as shown in \cref{ex:RIAF} below, that interpreting the result of \IAF{} as a dense set is incorrect in general, since it may contain non-integer valuations that do not ensure unavoidability.
The differences with \IAF{} (which was defined in~\cite{JLR15}) are highlighted in \cref{algo:RIAF}.

\RIAF{} works as a post-order traversal of the symbolic reachability tree, keeping valuations that permit to go into branches reaching $\LocsTarget$ and cutting off branches leading to a deadlock or looping without any occurrence of~$\LocsTarget$.
More precisely, \RIAF{} uses three sets of valuations:
\begin{enumerate}%
	\item \KGood{} contains the set of parameter valuations that indeed satisfy AF, recursively computed by calling \RIAF{};
	\item \KBlock{} allows to cut off branches, for instance leading to deadlock or looping, by keeping only parameter valuations in the complement of the first state in that branch: imagine we have a guard $2\leq \clock\leq \param$, but selecting $\param<2$ we effectively cut the branch off. Now, for each branch we need to ensure that it always either goes to $\LocsTarget$ (giving the constraint \KGood), or we cannot take it (constraint \KBlock), hence the union in the algorithm (\cref{algo:RIAF:union}).
	Note that for completeness we need also parameters cutting some branches that do go to \LocsTarget, provided they still allow at least one branch to go there: for instance in Figure~\ref{fig:IAF-counterex}, assuming $\LocsTarget=\{\loc_1, \loc_2\}$, all parameters valuations greater than $\frac{1}{2}$ ensure unavoidability since both branches go to \LocsTarget. But all valuations less or equal to $\frac{1}{2}$ also do because only the upper branch remains.  Parameters that cut all branches are forbidden using \KLive.
	\item $\KLive$ (which is a set of valuations over $\Clock \cup \Param$) is necessary to forbid reaching states from which no transition can be taken for any $\edge$, even after some delay, and also to forbid cutting all branches.
\end{enumerate}

The main difference between \AF{} and \RIAF{} is that we use the convergence condition of \IAF{}, which operates on integer hulls instead of symbolic states, hence ensuring termination with the same reasoning as~\RIEF{} (\cref{lemma:finite}).

We state below the soundness and integer completeness of~\RIAF{}.
The proofs are similar to those of $\AF$ in~\cite{JLR15}, by using the additional arguments provided in the proof of~$\RIEF$, and in particular \cref{lemma:finite}.

\begin{theorem}\label{theorem:RIAF:soundness}
	Let $\K = \RIAF(\A,\Init(\A), \LocsTarget, \emptyset)$.
    We have:
    \begin{enumerate}
        \item Soundness: If $\pval \in \K$ then $\LocsTarget$ is unavoidable in $\valuate{\A}{\pval}$;
        \item Integer completeness: If $\pval$ is an integer parameter valuation, and $\LocsTarget$ is unavoidable in $\valuate{\A}{\pval}$ then $\pval \in \K$.
    \end{enumerate}
\end{theorem}
\begin{proof}
		The algorithm having terminated, it has explored a finite prefix $\treeprefix$ of
		$\symtree$. 
\begin{enumerate}
	\item Soundness: 
	Let $\symbstate$ be symbolic state in $\treeprefix$, and $\Passed$ be the set of ancestors of $\symbstate$ in $\treeprefix$. Further let $\Passed'$ be the set $\{(\loc,\IH(\Ext{\LargestC}{\Clock}(\C))\,\mid\, (\loc,\C)\in \Passed\}$.

	Then, by induction, we have $\RIAF(\A, \symbstate, \LocsTarget, \Passed')\subseteq \AF(\A, \symbstate, \LocsTarget, \Passed)$:
	\begin{itemize}
		\item if $\symbstate=(\loc,\C)$ is a leaf of $\treeprefix$, then either $\loc\in\LocsTarget$ and both algorithms return the same value $\projectP{\C}$, or the result of \RIAF{} is empty and thus included in whatever is the result of~\AF{}.
		\item if it is not a leaf, then by the induction hypothesis we have inclusion for all the constraints $\Kgood{}$ from the recursive calls. Polyhedra $\KLive{}$ and $\KBlock{}$ have the same value in both algorithms, and so it follows that the $\K$ returned by $\RIAF$ is included in the one returned by $\AF$.
    \end{itemize}

	\item Integer completeness: 
		Let $\pval$ be an \emph{integer} parameter valuation for which the unavoidability property holds in $\valuate{\A}{\pval}$. 

		Given a symbolic state~$\symbstate$ in~$\treeprefix$, let $\Passed$ be the set of all $\IH(\Ext{\LargestC}{\Clock}(\symbstate'))$ for all ancestors $\symbstate'$ of $\symbstate$ in~$\treeprefix$.
		We prove by induction on the finite tree~$\treeprefix$ that, for each symbolic state $\symbstate$ in $\treeprefix$ such that the sequence of edges leading to~$\symbstate$ is feasible in~$\valuate{\A}{\pval}$,
		  we have $\pval\in \RIAF(\A,\symbstate, \LocsTarget, \Passed)$:
		\begin{itemize}
        	\item if $\symbstate=(\loc,\C)$ is a leaf then  either it is because $\loc\in\LocsTarget$. Then, by \cref{cor:next}, $\pval$ is in $\projectP{\C}$ and thus in the result of \RIAF{} as expected.

			Or it is because there exists an ancestor $\symbstate'$ of $\symbstate$ such that 
			$\IH(\Ext{\LargestC}{\Clock}(\symbstate')) = \IH(\Ext{\LargestC}{\Clock}(\symbstate))$.			
		Since $\pval$ is an integer parameter valuation, by
		\cref{lemma:extintv2}, this means that
		$\valuate{\Ext{\LargestC}{\Clock}(\symbstate')}{\pval} = \valuate{\Ext{\LargestC}{\Clock}(\symbstate)}{\pval}$.
		
		Using \cref{lemma:extsucc}, we can replay the sequence of actions from $\symbstate'$ to $\symbstate$ from $\symbstate$ and obtain a new symbolic state $\symbstate''$ such that $\valuate{\Ext{\LargestC}{\Clock}(\symbstate'')}{\pval} = \valuate{\Ext{\LargestC}{\Clock}(\symbstate)}{\pval}$. This implies that we can repeat this procedure and obtain a path of symbolic states of $\valuate{\A}{\pval}$, and a corresponding concrete path using \cref{cor:next}, longer than any given length. Those paths never go through $\LocsTarget$, because by construction none of the ancestors of $\symbstate$ has a location in $\LocsTarget$, otherwise that node would have no successor in~$\treeprefix$.

		Finally, it follows from \cite{alur-IC-93} that since the unavoidability property holds, we must be able to reach $\LocsTarget$ within some fixed number of discrete transitions that depends only on the number of places, transitions, clocks and the maximal constant in $\valuate{\A}{\pval}$\footnote{Actually this would be the number of states in the \emph{region graph} of $\valuate{\A}{\pval}$ as defined in~\cite{alur-IC-93}, since a loop in that graph would imply an infinite run in $\valuate{\A}{\pval}$.}.

	It is therefore not possible that there exists an ancestor $\symbstate'$ of~$\symbstate$ such that
			$\IH(\Ext{\LargestC}{\Clock}(\symbstate')) = \IH(\Ext{\LargestC}{\Clock}(\symbstate))$.

		Finally, it may be that none of the two previous cases hold but $\symbstate$ has no successor. But then, by \cref{cor:next}, any state $(\loc, \clockval)\in\symbstate$ is reachable in
		$\valuate{\A}{\pval}$ via some run~$\varrun$ that, as before, never goes through $\LocsTarget$ and has no discrete successor, even after some delay. As before this is not possible.
			
			\item if $\symbstate=(\loc,\C)$ is not a leaf then we must be in the recursive case and there must exist at least one successor of $\symbstate$.

			Consider successor $\symbstate'$ of $\symbstate$ by edge $\edge$. There are two possibilities:
			\begin{itemize}
				\item either $\edge$ is feasible in $\valuate{\A}{\pval}$ from some state $\state$ in $\valuate{\symbstate}{\pval}$; we know there is such a state because we have assumed that the sequence of edges leading to $\symbstate$ is feasible in $\valuate{\A}{\pval}$ and thus, by \cref{cor:next}, $\valuate{\C}{\pval}$ is not empty.

				Then by the induction hypothesis, $\pval$ is in $\KGood$. We also have that the guard of $\edge$ is satisfiable from $\state$, possibly after some delay. This implies that $\pval$ is also in $\KLive$ and will be at the end of the \texttt{foreach} loop.
				\item or $\edge$ is not feasible in $\valuate{\A}{\pval}$ from any state in $\valuate{\symbstate}{\pval}$. Then by \cref{cor:next}, $\pval$ is not in~$\projectP{\symbstate'}$, and thus in~$\KBlock$.
			\end{itemize}

			The first case must happen at least once, otherwise by \cref{cor:next} we have a deadlocked run in $\valuate{\A}{\pval}$ never reaching $\LocsTarget$. Therefore we have $\pval\in\KLive$ at the end of the \texttt{foreach} loop, and for all successors we have $\pval$ either in $\KGood$ or in $\KBlock$.
			It follows that $\pval$ is in the returned value of $\K$. 
		\qedhere
		\end{itemize}
		
	\end{enumerate}
\end{proof}
\begin{example}
	\begin{figure}
		\centering
		\begin{tikzpicture}[PTA, node distance=3.5cm, thin]

			\node[location, initial] (l0) {$\loc_0$};
			\node[location, right of=l0] (l1) {$\loc_1$};
			\node[location, below of=l1, yshift=2.5cm] (l2) {$\loc_2$};
			
			\path
				(l0) edge[] node {$\clockx \geq 0$} (l1)
				(l0) edge[bend right] node[below left] {$1 \leq \clockx \leq 2 \paramp$} (l2)
				;
		\end{tikzpicture}
        \caption{Counter-example to the density of the result of $\IAF$.} %
		\label{fig:IAF-counterex}
	\end{figure}
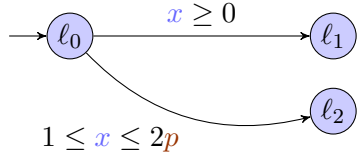

	Consider the PTA in \cref{fig:IAF-counterex}. To ensure the $ AF \{\loc_1\}$ property, we need to cut the transition from~$\loc_0$ to~$\loc_2$.
	With a computation similar to \cref{ex:RIEF}, the polyhedron $C_2$ obtained in $\loc_2$ is $1\leq x \wedge 1 \leq 2\param$, which implies $\param\geq \frac{1}{2}$.
	The integer hull of $C_2$ is $1\leq x \wedge 1\leq \param$, which implies $\param\geq 1$.
	In order to block the path to~$\loc_2$ in~$\loc_0$, we intersect with the complement of the projection onto the parameters of $\IH(C_2)$, \ie{} $\param<1$.
	Since there is no constraint on the transition from~$\loc_0$ to~$\loc_1$ the final result of the former algorithm $\IAF$ from~\cite{JLR15} is actually $\param<1$.
	For integer parameters this means $\param=0$, which is correct. But if we interpret the result over rational-valued parameters, we obtain that, for instance, $\param=\frac{1}{2}$ should be a valuation ensuring the property---while it is obviously not.

	On the same example, $\RIAF$ gives (on this example) the exact result, that is $\param<\frac{1}{2}$.
    \label{ex:RIAF}
\end{example}
\subsection{Parametric traces preservation: \RITPS{}}\label{ss:IM}

We address here the traces preservation synthesis problem.
That is, given a PTA~$\A$ and a reference parameter valuation~$\pval$, we seek other valuations~$\pval'$ such that $\valuate{\A}{\pval'}$ has the same trace set as~$\valuate{\A}{\pval}$.
In~\cite{ACEF09,ALM20}, we defined a procedure \TPS{} (for ``traces preservation  synthesis'', and also called ``inverse method'') that may not terminate, but is correct whenever it terminates.
$\TPS$ is implemented in the \imitator{} software~\cite{Andre21}.

\subsubsection{The algorithm}\label{ss:IM:algorithm}

We give $\RITPS$ in \cref{algo:RITPSrecursive}.
The differences with the original \TPS{} (formalized in~\cite{ALM20}) are highlighted.

$\RITPS$ is a recursive algorithm exploring the state space in a depth-first search manner.
Given a current symbolic state, its $\pval$-compatibility is first checked~(\cref{algo:RITPS:check}).
\begin{itemize}
	\item If this state is $\pval$-compatible, the current constraint is intersected with the projection onto the parameters of \hl{the integer hull of} this state (\cref{algo:RITPSrecursive:interIH}).
	Then, if this state was not visited up to the integer hull (\ie{} if the integer hull of this state does not belong to the list of the integer hulls of passed states, \cref{algo:RITPSrecursive:checkpassed}), then all its successors are computed, and the results of the recursive calls are intersected with the current constraint.
	Note that \hl{the integer hull of} the current state is added to the list of passed states in the recursive calls---which is the crux for ensuring termination.

	\item Conversely, if the current state is not $\pval$-compatible, \hl{the integer hull of} the negation of its projection onto the parameters is intersected with the current constraint (\cref{algo:RITPSrecursive:piincompatible}).
\end{itemize}

\begin{algorithm}[t]
	\SetKwInOut{Input}{input}
	\SetKwInOut{Output}{output}

	\Input{A bounded PTA~$\A$, an \hl{integer} parameter valuation $\pval$, a current symbolic state $\symbstate = (\loc, \C)$, a set $\Passed$ of passed states on the current path} %
	\Output{Set $\K$ of parameter valuations guaranteeing traces preservation}

	\BlankLine
	$
		\K \assign \KTrue
	$

	\eIf{$\pval \in \projectP{\C}$\nllabel{algo:RITPS:check}}{%
		$\K \assign \K \cap \projectP{\Big($\hl{$\IH$}$\big($\hl{$\Ext{\LargestC}{\Clock}$}$(\C) \big)\Big)}$\label{algo:RITPSrecursive:interIH}

		\If{$ $\hl{$\IH$}$\big($\hl{$\Ext{\LargestC}{\Clock}$}$(\symbstate)\big) \notin \Passed$\nllabel{algo:RITPSrecursive:checkpassed}}{%

			\ForEach{outgoing $\edge$ from $\loc$ in $\A$ s.t.\ $\Succ(\symbstate, \edge) \neq \emptyset$\nllabel{algo:RITPS:foreach}}{
				$\symbstate' \assign \Succ(\symbstate, \edge)$\nllabel{algo:RITPS:succ}

				$\K \assign \K \cap \RITPS\Big(\A, \pval, \symbstate', \Passed \cup \big\{ $\hl{$\IH$}$($\hl{$\Ext{\LargestC}{\Clock}$}$(\symbstate)) \big\} \Big) $\nllabel{algo:RITPS:rec}
			} %
		}
	}
	{
		$\K \assign \K \cap \, $\hl{$\IH$}$\big(\neg (\projectP{\C}) \big)$\nllabel{algo:RITPSrecursive:piincompatible}
	}

	\Return{$\K$} \nllabel{algo:RITPS:return}

	\caption{$\RITPS(\A, \pval, \symbstate, \Passed)$}
	\label{algo:RITPSrecursive}
\end{algorithm}

The base call is $\RITPS(\A, \pval, \Init(\A), \emptyset)$.

The algorithm~$\RITPS$ differs from~$\TPS$ in four main aspects.
\begin{enumerate}
	\item The termination condition is not anymore the equality of the states, but of the \emph{integer hull} of the extrapolation of the states (\cref{algo:RITPSrecursive:checkpassed}): this is needed to ensure termination.
	
	\item $\RITPS$ does not return the intersection of the $\pval$-compatible states' sets of parameter valuations, but returns the intersection of their \emph{integer hulls} (\cref{algo:RITPSrecursive:interIH}):
	this is needed to ensure soundness (see a counter-example in \cref{counterexample:RITPS:IH}).
	
	\item $\RITPS$ does not return the intersection of the negation of the $\pval$-incompatible states' sets of parameter valuations, but returns the intersection of the \emph{integer hulls} of their negation (\cref{algo:RITPSrecursive:piincompatible}):
	this is needed to ensure soundness.

	\item $\pval$ must be an \emph{integer valuation}:
	this is again needed to ensure soundness (see a counter-example in \cref{counterexample:RITPS:integer}).
\end{enumerate}
We believe that requiring $\pval$ to be an integer is harmless in practice, since a rational can be turned into an integer by appropriately resizing the constants of the PTA.
However, the result of~$\RITPS$ still outputs in general a \emph{dense} set of rationals (just as~$\TPS$), and hence can synthesise parameter valuations infinitesimally close to~$\pval$ with the same set of traces---producing a (potentially partial) measure of the system robustness with respect to the set of traces (see~\cite{APP13}).

\subsubsection[Examples]{Examples}
\begin{example}\label{example:RITPS:termine}
	\begin{figure}[tb]
	{
	\centering
	
\begin{subfigure}[b]{0.35\textwidth}
	\begin{tikzpicture}[PTA, thin, node distance=4cm]

		\node[location, initial] (l0) {$\loc_1$};
		\node[location, right of=l0] (l1) {$\loc_1'$};

		\node [invariant, above] at (l0.north) {$\clocki{1} \leq \parami{2}$};
		\node [invariant, above] at (l1.north) {$\clocki{2} \leq \parami{1}$};

		\path
			(l0) edge[bend left] node[above,align=center] {$\styleact{a}$\\$\clocki{1} \assign 0$} (l1)
			(l1) edge[bend left] node[above,yshift=-2,align=center] {$\styleact{b}$\\$\clocki{2} = \parami{1}$} node[below]{$\clocki{2} \assign 0$} (l0)
			;
	\end{tikzpicture}
	\caption{PTA}
	\label{figure:RITPS:termine}
\end{subfigure}
\hfill
\begin{subfigure}[b]{0.55\textwidth}
		\begin{tikzpicture}[scale=.7]
			\small
			\tikzstyle{axe} = [line width=1pt, ->, draw=black!80]
			\tikzstyle{fondgris} = [fill=black!5, draw=none]
			\tikzstyle{C1} = [line width=1pt, draw=green!50!black, dashed]
			\tikzstyle{C2} = [line width=1pt, draw=red!50!black, dotted]
			\tikzstyle{contrainte} = [fill=blue!50!white, draw=blue!30!black]

			\draw[fondgris] (0, 0) rectangle (5, 5);

			\draw[contrainte] (0, 0) -- (5, 5) -- (0, 5) -- cycle;

			\draw[C1] (0, 0) -- (5, 5*6/7) -- (5, 5) -- (0, 5) -- cycle;
			\draw[C2] (0, 0) -- (5, 5*13/14) -- (5, 5) -- (0, 5) -- cycle;

			\draw[C2] (7, 4) --++ (1, 0);
			\node at (8, 4) [right] {$7 \parami{2} \geq 6 \parami{1}$};
			\draw[C1] (7, 3) --++ (1, 0);
			\node at (8, 3) [right] {$6 \parami{2} \geq 5 \parami{1}$};
			
			\node at (2, 4.5) {$\RITPS(\A, \pval, \emptyset)$};

			\path[axe] %
				(0, 0) -- ++ (6.5, 0);
			\node at (7, -.5) {$\parami{1}$};
			\path[axe] %
				(0, 0) -- ++ (0, 5.5);
			\node at (-1.2, 5.2) {$\parami{2}$};

			\foreach \x in {0, 1, ..., 6} %
				\foreach \y in {0, 1, ..., 5} %
					\node at (\x, \y) {\Large{$\cdot$}};

			\node at (1, 2) {\Huge{$\textcolor{red!50!black}{\cdot}$}};
			\node at (1, 2.5) {$\textcolor{red!50!black}{\pval}$};

			\foreach \x in {0, 1, ..., 5} %
				\node at (-.5, \x) {$\x$};
			\foreach \x in {0, 1, ..., 5} %
				\node at (\x, -.5) {$\x$};

		\end{tikzpicture}
	\caption{Sets of parameter valuations}
	\label{figure:ex:IM:IH}
\end{subfigure}

	}
	\caption{Example for which $\RITPS$ terminates but not $\TPS$}
	\end{figure}
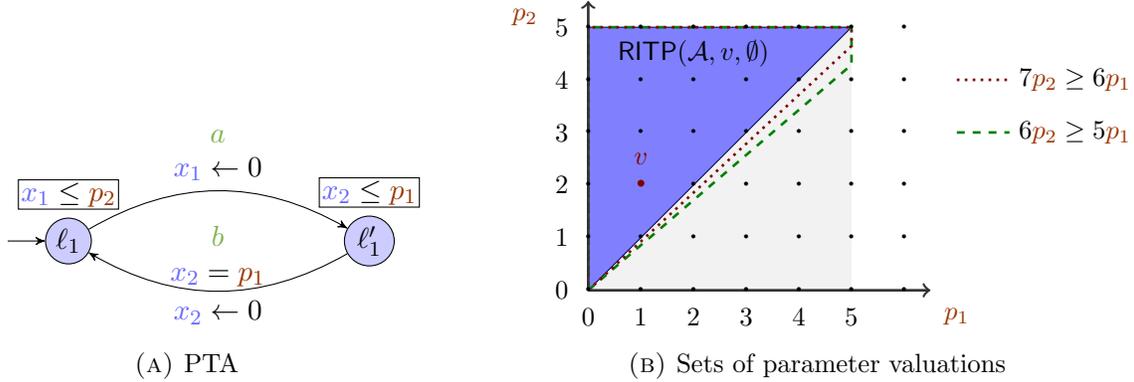

	Consider the PTA~$\A$ depicted in \cref{figure:RITPS:termine}; assume that the domain of~$\paramitext{1}$ and~$\paramitext{2}$ is $[0 , 5]$.
	Let $\pval$ be the valuation such that $\pval(\paramitext{1}) = 1$ and $\pval(\paramitext{2}) = 2$.
	For this valuation, the trace set is made of the single following infinite trace alternating between the two locations:
	\[\loc_1  \FlecheConcrete{a} \loc_1' \FlecheConcrete {b} \loc_1  \FlecheConcrete{a} \loc_1' \FlecheConcrete {b} \cdots\]

	$\TPS(\A, \pval, \Init(\A), \emptyset)$ does not terminate, because sets of parameter valuations are generated of the form
		$\clockitext{1} - \clockitext{2} \geq i \times (\paramitext{1} - \paramitext{2})$, with $i \in \setN$ growing without bound,
	and all these sets of parameter valuations are incomparable.
	In contrast, for $\RITPS$, the integer hull of the extrapolation of these polyhedra will eventually be the same (in fact, no extrapolation is needed, because both clocks are always $\leq 5$ due to the invariants and the bounded parameter domain); in \cref{figure:ex:IM:IH}, we give two such sets of parameter valuations, \ie{}
		$6 \paramitext{2} \geq 5 \paramitext{1}$
		and
		$7 \paramitext{2} \geq 6 \paramitext{1}$, that indeed contain the same set of integer points within $[0 , 5]^2$, and hence the integer hull (of their extrapolation) is equal.%
	\footnote{%
		Strictly speaking, the sets of parameter valuations are over $\Clock \cup \Param$, \ie{} in 4 dimensions (2~clocks and 2~parameters); for the sake of better understanding, in \cref{figure:ex:IM:IH}, we project them onto~$\Param$ so as to represent them in 2~dimensions.
	}
	Eventually, $\RITPS$ returns
		$0 \leq \paramitext{1} \leq \paramitext{2} \leq 5$.
	In fact, for this example, it can be shown that the result output by $\RITPS$ is the maximal result, \ie{} any valuation such that $\paramitext{1} > \paramitext{2}$ has a trace set different from~$\pval$ (which will be a single trace made of a \emph{finite} alternation between $\loc_1$ and~$\loc_1'$, with the trace length proportional to the ratio between~$\paramitext{1}$ and~$\paramitext{2}$).
\end{example}

The following example illustrates the necessity to return the \emph{integer hull of} the $\pval$-compatible states (\cref{algo:RITPSrecursive:interIH}); otherwise, $\RITPS$ would output an incorrect result.

	\begin{figure}[tb]
	{
	\centering

	\begin{tikzpicture}[PTA, thin, xscale=1.4]

		\node[location, initial] %
	at(1.5,0) (l1) {$\loc_1$};
		\node[location] at(3.5,0) (l2) {$\loc_2$};
		\node[location] at(6,0) (l3) {$\loc_3$};

		\node [invariant, above] at (l1.north) {$\clockx \leq 2\paramp$};

		\path
			(l1) edge[] node [above,align=center] {$\clockx = 1$} node[below]{$\clockx \assign 0$} (l2)
			(l2) edge[loop above] node[align=center] {$\clockx = 1 \land \clockx \leq \paramp$\\$\clockx \assign 0$} (l2)
			(l2) edge[] node [above,align=center] {$\paramp = \clockx = \paramq - 1$} (l3)
			;
	\end{tikzpicture}

	}
	\caption{Example illustrating the need of $\IH$ in the result of $\RITPS$}
	\label{figure:RITPS:IHbis}
	\end{figure}
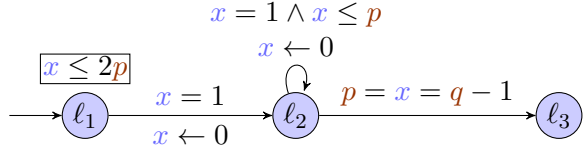
\begin{example}\label{counterexample:RITPS:IH}
	Consider the PTA~$\A$ depicted in \cref{figure:RITPS:IHbis}, featuring a single clock~$\clockxtext$ and two parameters (actions are not depicted, because not relevant for this example---one can assume that every edge is labelled with some action~$a$); assume that the bounded parameter domain of~$\paramptext$ and $\paramqtext$ is $[0;2]$.
	Let $\pval$ be such that $\pval(\paramptext) = 1$ and $\pval(\paramqtext) = 2$.
	Then $\TPS(\A\, \pval, \Init(\A), \emptyset)$ synthesizes the set of valuations~$\K$ restricted to~$\K = \{ \pval \}$.

	Now, consider the algorithm $\RITPS$.
	The first time $\loc_2$ is visited, the associated set of valuations $\C_2$ is:
	$\frac{1}{2} \leq \paramptext \leq 2$ (other trivial inequalities such as $\clockxtext \geq 0$ are omitted; also note that the upper bound for~$\paramptext$ comes from its bounded parameter domain).
	Its successor is $(\loc_3, \C_3)$ with $\C_3$ being:
	$\frac{1}{2} \leq \paramptext \leq 1 \land \paramptext \leq \clockxtext \land \paramqtext = \paramptext + 1$.
	The projection of this set onto~$\Param$ is
	$\paramqtext = \paramptext + 1 \land \frac{1}{2} \leq \paramptext \leq 1$. %

	When visiting $\loc_2$ for the second time (via the self-loop over~$\loc_2$), the associated set of valuations $\C_2'$ is:
	$1 \leq \paramptext \leq 2$.
	Observe that $\IH(\C_2) = \IH(\C_2')$.
	Therefore, $\RITPS$ does not explore the successors of~$\C_2'$;
	if $\RITPS$ returned $\projectP{\C}$ at \cref{algo:RITPSrecursive:interIH}, then the result would be $\paramqtext = \paramptext + 1 \land \frac{1}{2} \leq \paramptext \leq 1$---which is wrong.

	Now, $\RITPS$ actually returns $\projectP{\Big(\IH\big(\Ext{\LargestC}{\Clock}(\C_3) \big)\Big)}$, which gives the expected result $\paramptext = 1 \land \paramqtext = 2$.
\end{example}
\begin{figure}[tb]
{
\centering

\begin{tikzpicture}[PTA, thin, xscale=1]

	\node[location, initial] at(0, 0) (l1) {$\loc_1$};
	\node[location] at(3, 0) (l2) {$\loc_2$};

	\node [invariant, above] at (l1.north) {$\clockx \leq 2\paramp$};

	\path
		(l1) edge[] node [above,align=center] {$\clockx = 1$} (l2);
\end{tikzpicture}

}
\caption{Example illustrating the need for an integer valuation in $\RITPS$}
\label{figure:RITPS:counterexample-integer}
\end{figure}
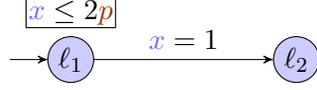

The following example illustrates the necessity of requiring the input $\pval$ of~$\RITPS$ to be an integer.

\begin{example}\label{counterexample:RITPS:integer}
	Consider the PTA~$\A$ in \cref{figure:RITPS:counterexample-integer}.
	Consider $\pval$ such that $\pval(\paramptext) = \frac{1}{2}$.
	Then $\RITPS(\A, \pval, \Init(\A), \emptyset)$ outputs the set~$\K$ of parameter valuations defined by $1 \leq \paramptext \leq 2$.
	Then obviously $\pval \notin \K$, which contradicts the expected soundness (see \cref{theorem:RITPS:soundness}).
\end{example}
\subsubsection{Termination and soundness}

Termination is immediate from the finiteness of the integer hulls of extrapolations (\cref{lemma:finite}), which makes \RITPS{} only explore a finite number of states.

We state below the soundness and integer completeness of~\RITPS{}.

We first need the following lemma from~\cite{JLR15}.

\begin{lemC}[{\cite[Lemma~6]{JLR15}}]\label{lemma:JLR15:lemma6}
	Given a symbolic state $\symbstate$ and an edge~$\edge$, we have
	$\IH(\Succ(\IH(\symbstate), \edge)) = \IH(\Succ(\symbstate, \edge))$.
\end{lemC}

The following result derives immediately:

\begin{lemma}\label{lemma:IH-Succ}
	Given two symbolic states $\symbstate_1$ and~$\symbstate_2$ such that $\IH(\symbstate_1) = \IH(\symbstate_2)$, and an edge~$\edge$, we have
	$\IH(\Succ(\symbstate_1, \edge)) = \IH(\Succ(\symbstate_2, \edge))$.
\end{lemma}
\begin{proof}
	$\IH(\Succ(\symbstate_1, \edge)) = \IH(\Succ(\IH(\symbstate_1), \edge))$ (by \cref{lemma:JLR15:lemma6}),
	then because $\IH(\symbstate_1) = \IH(\symbstate_2)$, we have
	$\IH(\Succ(\symbstate_1, \edge)) = \IH(\Succ(\IH(\symbstate_2), \edge)) = \IH(\Succ(\symbstate_2, \edge))$.
\end{proof}

The following results are immediate, yet allow our proof of \cref{proposition:RITPS:soundness} to be more clear.

\begin{lemma}\label{lemma:IHExt:projection}
	Given two symbolic states $\symbstate_1$ and~$\symbstate_2$ such that $\IH \big(\Ext{\LargestC}{\Clock}(\symbstate_1)\big) = \IH\big(\Ext{\LargestC}{\Clock}(\symbstate_2)\big)$, we have
	$\projectP{\Big(\IH\big(\Ext{\LargestC}{\Clock}(\symbstate_1)\big)\Big)} = \projectP{\Big(\IH\big(\Ext{\LargestC}{\Clock}(\symbstate_2)\big)\Big)}$.
\end{lemma}

\begin{lemma}\label{lemma:IHExt:projection-inside}
	Given two symbolic states $\symbstate_1$ and~$\symbstate_2$ such that $\IH \big(\Ext{\LargestC}{\Clock}(\symbstate_1)\big) = \IH\big(\Ext{\LargestC}{\Clock}(\symbstate_2)\big)$, we have
	$\IH\Big(\projectP{\Big(\IH\big(\Ext{\LargestC}{\Clock}(\symbstate_1)\big)\Big)}\Big) = \IH\Big(\projectP{\Big(\IH\big(\Ext{\LargestC}{\Clock}(\symbstate_2)\big)\Big)}\Big)$.
\end{lemma}

\begin{proposition}[Soundness of~$\RITPS$]\label{proposition:RITPS:soundness}
    Upon termination of $\RITPS$, if $\pval' \in \RITPS(\A, \pval, \Init(\A), \emptyset)$ then
			$\Traces(\valuate{\A}{\pval'}) = \Traces(\valuate{\A}{\pval})$.
\end{proposition}
\begin{proof}
Since \RITPS{} has terminated, it has explored a finite prefix $\treeprefix$ of the symbolic reachability tree~$\symtree$.
	Let $\symbstate$ be symbolic state in $\treeprefix$, and $\Passed$ be the set of ancestors of $\symbstate$ in $\treeprefix$.
	Further let $\Passed'$ be the set $\{(\loc,\IH(\Ext{\LargestC}{\Clock}(\C)) \mid (\loc,\C)\in \Passed\}$.

	Let us show by induction that we have $\RITPS(\A, \pval, \symbstate, \Passed') \subseteq \TPS(\A, \pval, \symbstate, \Passed)$:
	\begin{itemize}
		\item Base case: If $\symbstate = (\loc,\C)$ is a leaf of~$\treeprefix$, then
			\begin{itemize}
				\item either $\symbstate$ is $\pval$-compatible and $\RITPS$ returns $\projectP{(\IH(\Ext{\LargestC}{\Clock}(\C)))}$ while $\TPS$ returns $\projectP{\C}$; by \cref{lemma:extproj} and since $\IH(\C) \subseteq \C$ (by definition of~$\IH$), the inclusion thus holds;

				\item or $\symbstate$ is $\pval$-incompatible and $\RITPS$ returns $\IH(\neg \projectP{\C})$ while $\TPS$ returns $\neg\projectP{\C}$, and the inclusion holds again.
			\end{itemize}

		\item Induction case: If $\symbstate$ is not a leaf, then two cases arise.
		\begin{itemize}
			\item If $\IH(\Ext{\LargestC}{\Clock}(\symbstate)) \notin \Passed'$, by induction hypothesis, the inclusion holds again.

			\item The case when $\IH(\Ext{\LargestC}{\Clock}(\symbstate)) \in \Passed'$ is more delicate, as this situation did not occur in the proofs of \cref{theorem:RIEF:soundness,theorem:RIAF:soundness}.
			Since $\IH(\Ext{\LargestC}{\Clock}(\symbstate)) \in \Passed'$, then there exists $\symbstate'' = (\loc, \C'') \in \Passed'$ such that $\IH(\Ext{\LargestC}{\Clock}(\symbstate)) = \IH(\Ext{\LargestC}{\Clock}(\symbstate''))$, and $\symbstate''$ is currently being explored by~$\RITPS$, \ie{} a call to $\RITPS(\A, \pval, \symbstate'', \Passed'')$ was made before, and this call led recursively to the call to $\RITPS(\A, \pval, \symbstate, \Passed')$.
			Hence, $\Passed''$ denotes the set of passed states when $\RITPS$ was called on~$\symbstate''$.
			Let us now show by induction that whenever $\Passed'' \subseteq \Passed'$ and $\IH(\Ext{\LargestC}{\Clock}(\symbstate)) = \IH(\Ext{\LargestC}{\Clock}(\symbstate''))$ then $\RITPS(\A, \pval, \symbstate'', \Passed'') = \RITPS(\A, \pval, \symbstate, \Passed')$.

			If $\symbstate''$ is a leaf state and is $\pval$-compatible, then because $\pval$ is an integer valuation and because $\IH(\Ext{\LargestC}{\Clock}(\symbstate)) = \IH(\Ext{\LargestC}{\Clock}(\symbstate''))$, then $\symbstate$ is also $\pval$-compatible.
			For the same reason and by \cref{lemma:IHExt:projection}, $\projectP{(\IH(\Ext{\LargestC}{\Clock}(\symbstate)))} = \projectP{(\IH(\Ext{\LargestC}{\Clock}(\symbstate'')))}$ and both calls to~$\RITPS$ return the same result.

			If $\symbstate''$ is a leaf state and is $\pval$-incompatible, then for the same reason and by \cref{lemma:IHExt:projection-inside}, both calls to~$\RITPS$ return the same result.

			Now, let us consider the induction case.
			Again, if $\symbstate''$ is $\pval$-compatible, then because $\pval$ is an integer valuation and because $\IH(\Ext{\LargestC}{\Clock}(\symbstate)) = \IH(\Ext{\LargestC}{\Clock}(\symbstate''))$, then $\symbstate$ is also $\pval$-compatible, and the intersection \cref{algo:RITPSrecursive:interIH} gives the same result.
			If $\symbstate'' \in \Passed''$, then because $\IH(\Ext{\LargestC}{\Clock}(\symbstate)) = \IH(\Ext{\LargestC}{\Clock}(\symbstate''))$ and $\Passed'' \subseteq \Passed'$, we have that $\symbstate \in \Passed'$ as well, and therefore both calls return the same result.

			Now, if $\symbstate'' \notin \Passed''$, then for each edge~$\edge$, because $\IH(\Ext{\LargestC}{\Clock}(\symbstate)) = \IH(\Ext{\LargestC}{\Clock}(\symbstate''))$, we have that
			$\IH(\Ext{\LargestC}{\Clock}(\Succ(\symbstate , \edge))) = \IH(\Ext{\LargestC}{\Clock}(\Succ(\symbstate'', \edge)))$ by \cref{lemma:IH-Succ}.
			In addition, because $\Passed'' \subseteq \Passed'$ by induction, then we have $\Passed'' \cup \{ \IH(\Ext{\LargestC}{\Clock}(\symbstate'')) \} \subseteq \Passed' \cup \{ \IH(\Ext{\LargestC}{\Clock}(\symbstate)) \}$.
			Hence, by induction, both calls to~$\RITPS$ return the same result.
		\end{itemize}
		As a consequence, discarding a state because it is equal to a former visited state (\cref{algo:RITPSrecursive:checkpassed}) is harmless because it would give the same result.
		The visit of~$\RITPS$ to that former state explores (by definition) the same tree prefix as~$\TPS$, up to the integer hulls.
		Hence, both algorithms explore the same tree prefix (up to the integer hulls), and therefore by induction hypothesis, $\RITPS(\A, \pval, \symbstate, \Passed') \subseteq \TPS(\A, \pval, \symbstate, \Passed)$.

		Finally, since for all $\pval' \in \TPS(\A, \pval, \Init(\A), \emptyset), \Traces(\valuate{\A}{\pval'}) = \Traces(\valuate{\A}{\pval})$ from~\cite{ALM20}, then for all $\pval' \in \RITPS(\A, \pval, \Init(\A), \emptyset)$, it holds that $\Traces(\valuate{\A}{\pval'}) = \Traces(\valuate{\A}{\pval})$.
		\qedhere
    \end{itemize}
\end{proof}
\begin{proposition}[Integer completeness of~$\RITPS$]\label{proposition:RITPS:integercompleteness}
    Upon termination of $\RITPS$, if $\pval'$ is an integer parameter valuation, and $\Traces(\valuate{\A}{\pval'}) = \Traces(\valuate{\A}{\pval})$, then $\pval' \in \RITPS(\A, \pval, \Init(\A), \emptyset)$.
\end{proposition}
\begin{proof}
	Let us show that, for any integer parameter valuation~$\pval'$, $\pval' \in \RITPS(\A, \pval, \Init(\A), \emptyset)$ iff $\pval' \in \TPS(\A, \pval, \Init(\A), \emptyset)$.
	The proof essentially follows the induction case of the proof of \cref{proposition:RITPS:soundness}.
	That is, given a finite tree prefix of the symbolic reachability tree ending in a symbolic state~$\symbstate$, then either $\RITPS$ explores the same tree prefix, and both algorithms return a similar result---up to their integer hull---which ensures the fact that an \emph{integer} parameter valuation belongs to the result of~$\RITPS$ iff it belongs to the result of~$\TPS$.
	Or $\RITPS$ did not explore that prefix in full because, from a given symbolic state~$\symbstate$, $\RITPS$ found a former state $\symbstate'$ such that $\IH(\Ext{\LargestC}{\Clock}(\symbstate)) = \IH(\Ext{\LargestC}{\Clock}(\symbstate'))$ and stopped exploring.
	But, from the induction case in the proof of \cref{proposition:RITPS:soundness}, exploring the tree from~$\symbstate'$ would yield the same result as the exploration from~$\symbstate$ and would match the result of~$\TPS$ up to the integer hull, and therefore for any \emph{integer} parameter valuation, the results of~$\TPS$ and~$\RITPS$ coincide.

	The result finally follows from the fact that for any parameter valuation~$\pval'$, $\Traces(\valuate{\A}{\pval'}) = \Traces(\valuate{\A}{\pval})$ iff $\pval' \in \TPS(\A, \pval, \Init(\A), \emptyset)$~\cite{ALM20}.
\end{proof}
\begin{theorem}\label{theorem:RITPS:soundness}
    Upon termination of $\RITPS(\A, \pval, \Init(\A), \emptyset)$, we have:
    \begin{enumerate}
        \item Soundness:
        \begin{itemize}
			\item $\pval \in \RITPS(\A, \pval, \Init(\A), \emptyset)$;
        	\item if $\pval' \in \RITPS(\A, \pval, \Init(\A), \emptyset)$ then
			$\Traces(\valuate{\A}{\pval'}) = \Traces(\valuate{\A}{\pval})$;
        \end{itemize}

        \item Integer completeness: If $\pval'$ is an integer parameter valuation, and $\Traces(\valuate{\A}{\pval'}) = \Traces(\valuate{\A}{\pval})$, then $\pval' \in \RITPS(\A, \pval, \Init(\A), \emptyset)$.
    \end{enumerate}
\end{theorem}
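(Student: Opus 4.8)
The plan is to reuse, almost line for line, the soundness and integer-completeness proofs of $\TPS$ from~\cite{ALM20}, and to patch the three places where $\RITPS$ deviates from $\TPS$ — the integer-hull-of-extrapolation convergence test on \cref{algo:RITPS:IH}, the integer hull taken when accumulating $\Kgood$ on \cref{algo:RITPS:interIH}, and the restriction of the input $\pval$ to integer valuations — using exactly the extrapolation machinery of \cref{section:extrapolation} and the convergence argument already deployed for $\RIEF$ and $\RIAF$. Termination is the easy part: by \cref{lemma:finite} only finitely many constraints $\IH(\Ext{\LargestC}{\Clock}(\C))$ arise over the reachable symbolic states, so the test on \cref{algo:RITPS:IH} adds a state to $\Waiting$ only finitely often and the loop halts; I recall this here only because the rest of the argument concerns the finite object the algorithm actually returns.

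For soundness, I would let $T$ denote the finite prefix of the symbolic reachability tree explored by $\RITPS(\A,\pval)$, with node labels $\Passed$, and use the loop invariants: $\Kgood$ is the conjunction of $\projectP{\IH(\C)}$ over all $\pval$-compatible $(\loc,\C)\in\Passed$, and $\Kbad$ the disjunction of $\projectP{\C}$ over all $\pval$-incompatible ones, the output being $\K=\Kgood\land\neg\Kbad$. Fix $\pval'\models\K$. Since $\IH(\C)\subseteq\C$ and projection is monotone, $\projectP{\IH(\C)}\subseteq\projectP{\C}$, hence every $\pval$-compatible state of $\Passed$ is $\pval'$-compatible, and dually every $\pval$-incompatible state of $\Passed$ stays $\pval'$-incompatible. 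The core is then to show, as in~\cite{ALM20}, that $\valuate{\A}{\pval}$ and $\valuate{\A}{\pval'}$ realise exactly the traces of the compatible paths of $T$. For $\pval$ this combines (i) realisability of compatible paths via \cref{lemma:next,cor:next}, and (ii) the fact that no trace of $\valuate{\A}{\pval}$ escapes $T$: whenever a run would leave $T$ at a symbolic state $\symbstate'$ discarded on \cref{algo:RITPS:IH} in favour of an explored $\symbstate''$ with $\IH(\Ext{\LargestC}{\Clock}(\symbstate'))=\IH(\Ext{\LargestC}{\Clock}(\symbstate''))$, I would use that $\pval$ is integer together with \cref{lemma:extintv2} to get $\valuate{\Ext{\LargestC}{\Clock}(\symbstate')}{\pval}=\valuate{\Ext{\LargestC}{\Clock}(\symbstate'')}{\pval}$, then \cref{lemma:extbisimeq} to reroute the run through $\symbstate''$ without altering its trace, iterating along the strictly decreasing run length exactly as in the integer-completeness part of \cref{theorem:RIEF:soundness}. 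The same argument must then be run with $\pval'$ in place of $\pval$; this needs the discarding decisions to remain valid at $\pval'$, i.e.\ $\valuate{\Ext{\LargestC}{\Clock}(\symbstate')}{\pval'}=\valuate{\Ext{\LargestC}{\Clock}(\symbstate'')}{\pval'}$, and this is where the integer hull inside $\Kgood$ on \cref{algo:RITPS:interIH} does its work (cf.\ \cref{example:RITPSIH:return}): the representative $\symbstate''$ is itself processed and contributes its $\projectP{\IH(\cdot)}$ to $\Kgood$, which is precisely what removes from $\K$ the valuations at which the two extrapolated zones would differ.

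For integer completeness, let $\pval'$ be an integer valuation with $\Traces(\valuate{\A}{\pval'})=\Traces(\valuate{\A}{\pval})$. Every $\pval$-compatible state $(\loc,\C)\in\Passed$ is reachable in the parametric zone graph, hence — same trace set — reachable and compatible in $\valuate{\A}{\pval'}$, so $\pval'\models\projectP{\C}$; since $\pval'$ is integer and $\C$ is a reachable parametric zone, $\valuate{\C}{\pval'}$ is a zone with integer coefficients and thus contains an integer point $\wv{\clockval}{\pval'}$, whence $\pval'\models\projectP{\IH(\C)}$; therefore $\pval'\models\Kgood$. Conversely, if $\pval'\models\projectP{\C}$ for some $\pval$-incompatible $(\loc,\C)\in\Passed$, then by \cref{cor:next} $\valuate{\A}{\pval'}$ admits a run exhibiting a location (and a trace) that $\valuate{\A}{\pval}$ does not — contradicting equality of the trace sets; hence $\pval'\models\neg\Kbad$, and so $\pval'\models\Kgood\land\neg\Kbad=\RITPS(\A,\pval)$.

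The step I expect to require the most care is the soundness transfer from the integer reference valuation $\pval$ to an arbitrary $\pval'\models\K$: one must argue that the integer-hull-of-extrapolation convergence tests performed during the run on $\pval$ are still sound for every $\pval'$ in the returned constraint, even though \cref{lemma:extintv2} equates $\IH(\Ext{\LargestC}{\Clock}(\C))$ and $\Ext{\LargestC}{\Clock}(\C)$ only after instantiating at integer valuations. Everything else is a faithful transcription of the $\TPS$ proofs of~\cite{ALM20}, with the extrapolation lemmas of \cref{section:extrapolation} replacing the plain state-equality arguments, just as was done for \cref{theorem:RIEF:soundness,theorem:RIAF:soundness}.
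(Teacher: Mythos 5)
Your proposal matches the paper's proof, which for this theorem is itself only the remark that the arguments are ``easily adapted from those of $\TPS$ in~\cite{ALM20}, by using the additional arguments provided in the proofs of $\RIEF$ and $\RIAF$, and notably \cref{lemma:finite}'' --- exactly the adaptation you carry out, including the correct identification of why the integer hull on \cref{algo:RITPS:interIH} and the integer-valued input $\pval$ are needed (the paper's \cref{example:RITPSIH:return,example:RITPSIH:integer}). The step you single out as delicate --- justifying, for an arbitrary $\pval'\models\K$ rather than only for the integer reference valuation, the discarding decisions made on \cref{algo:RITPS:IH} --- is indeed the crux and is left equally implicit by the paper; note only that your assertion that the integer hulls accumulated in $\Kgood$ ``remove the valuations at which the two extrapolated zones would differ'' is not a consequence of any stated lemma (in particular \cref{lemma:extintv2} applies only at integer valuations) and would itself require proof in a fully written-out version.
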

\begin{proof}
	From \cref{proposition:RITPS:soundness,proposition:RITPS:integercompleteness}, and the fact that $\pval$ is an integer valuation.
\end{proof}
\section{Implementation and experiments using \romeo}\label{section:implementation-romeo}

We implemented the rational-valued algorithms (\EF, \AF), integer algorithms (\IEF, \IAF) and mixed algorithms (\RIEF, \RIAF) in the tool \romeo{}~\cite{LRST09};
polyhedra operations (both convex and non-convex) are handled by the Parma Polyhedra Library (PPL)~\cite{BHZ08}.
To illustrate the practical value of our work, we study the application of our algorithms
to the parametric non-preemptive model proposed in~\cite{JLR15} of the classical scheduling problem of~\cite{BFSV04} (\cref{ss:scheduling}),
to the classical Fischer mutual exclusion protocol with the parametric model proposed in~\cite{tripakis-FMSD-01} (\cref{ss:Fischer}) and
to the classical level crossing proposed in~\cite{BV03} (\cref{ss:levelcrossing}).
We used a Core i9/Mac OS computer for our experiments. %

\subsection{Scheduling benchmark}\label{ss:scheduling}

The scheduling example of~\cite{JLR15} consists in three tasks $\tau_1,\tau_2,\tau_3$ scheduled using static priorities ($\tau_1>\tau_2>\tau_3$) in a non-preemptive manner.
Task~$\tau_1$ is periodic with period~$\paramatext$ and a nondeterministic duration in $[10,\parambtext]$, where $\paramatext$ and~$\parambtext$ are parameters.
Task $\tau_2$ only has a minimal activation time of $2 \paramatext$ and has a nondeterministic duration in $[18,28]$ and finally $\tau_3$ is periodic with period $3 \paramatext$ and a nondeterministic duration in $[20,28]$.
First, we ask (using \AF, \IAF{} and \RIAF{} algorithms) for the parameter valuations that ensure that, for all executions, all tasks are executed at least once.
Algorithms $\AF$ and $\IAF$ produce the set of parameter valuations $\paramatext > 24 \land \parambtext \geq 10 \land\paramatext - \parambtext > 14$, while algorithm \RIAF{} produces the set of parameter valuations $\paramatext \geq 25 \land \parambtext \geq 10 \land \paramatext - \parambtext \geq 15$.
The three algorithms take comparable time and memory space--of about $1.0$\,s and~$6.0$\,MiB.

Furthermore, each task is subject to a deadline equal to its period so that it must only have one instance active at all times.  We ask for the parameter valuations that ensure that the system does not reach a deadline violation.\footnote{%
	The result is therefore the complement of the result given by \IEF{} and therefore an over-approximation containing no incorrect integer valuation.
}
Algorithm \EF{} does not terminate.
Algorithm $\IEF$ produces the set of parameter valuations $\paramatext \geq 34 \land \parambtext \geq 10 \land\paramatext - \parambtext\geq 24$ in 13.5\,s, while algorithm $\RIEF$ produces the set of parameter valuations $\paramatext > \frac{562}{17} \land \parambtext \geq 10 \land \paramatext - \parambtext > \frac{392}{17}$ in 14.3\,s.

As illustrated here, the results are indeed a bit more precise but the main improvement is of course the guaranteed density of the result. Also, \RIEF{} is generally slower than \IEF{} and profiling shows that this is due to a decreased efficiency in computing the integer hull: we start each time from the whole symbolic state instead of starting from the successor of an already computed integer hull.
This could be mitigated using a cache for the sets of parameter valuations generated in computing the integer hulls.

\begin{figure}
	\begin{tikzpicture}[PTA, scale=1.2]
 
		\node[location, initial] at (0,0) (l1) {$\loc_1$};
 
		\node[location] at (2,0) (l2) {$\loc_2$};
 
		\node[location] at (4,0) (l3) {$\loc_3$};
 
		\node[location] at (6, 0) (l4) {$\loc_4$};

		\path (l1) edge[below ] node{\begin{tabular}{c @{\ } c}
		& $  \clockx \in [1,4 \parama]$\\
		 & $\clockx\assign0$\\
		\end{tabular}} (l2);

		\path (l2) edge node{\begin{tabular}{c @{\ } c}
	& $  \clockx \in [3\paramb,1]$\\
		 & $\clockx\assign0$\\
		\end{tabular}} (l3);

		\path (l3) edge[below] node{\begin{tabular}{c @{\ } c}
	& $  \clockx \in [\parama,\paramb]$\\
		\end{tabular}} (l4);

	\end{tikzpicture}
	
	\caption{A PTA s.t.\ \IEF{} $\loc_4$ is false and \EF{} and \RIEF{} give $\paramatext \in [\frac{1}{4}, \frac{1}{3}] $ and
$\parambtext \in [\frac{1}{4}, \frac{1}{3}] $ and $\parambtext \geq \paramatext $}
	\label{figure:exempleQuiMarchePasEnInteger}

\end{figure}
\subsection{Fischer's mutual-exclusion protocol}\label{ss:Fischer}

In the following example, we first compare the algorithms when they all have a solution.
Let us therefore consider a well-known benchmark in the literature of real-time verification, introduced in~\cite{lamport-87,abadi-91}.
The system consists of $n$ identical processes which access a critical section by mutual exclusion.
We use the version of this example proposed in~\cite{tripakis-FMSD-01} where $A$ and~$B$ are parameters of the system and the correctness of the protocol depends on their values.

For $n$ processes, mutual exclusion is checked with \romeo\ by the CTL formula $AG \Big(\big( \sum_{i=1}^{n}  \mathit{Critical}[i] \big )\leq 1\Big)$ where $\mathit{Critical}[i]=1$ represents the fact that the process~$i$ is in the critical section ($0$~otherwise).
Recall that $AG(\varphi)$ is equivalent to $\neg EF(\neg \varphi)$.

The property holds iff $A > B$ for \EF{} and \RIEF{} and iff $A \geq B+1$ for \IEF{}.
A comparison of computation times and memory consumption for the mutual exclusion checking  is given in \cref{tab:fisher-comparison-one}.
We are in the case where \EF{} finishes and for this example, \EF{} has comparable performance to \IEF{} but with a more accurate result and is more efficient than \RIEF{} while giving the same result.
The reason is that for this example, the integer hull  used in \RIEF{} adds a computation step without being useful for convergence.

\begin{center}
\begin{table*}
\caption{Results for Fischer protocol with \IEF{}, \EF{}, \RIEF{}}
\centering
\setlength{\tabcolsep}{1.2em}
\hspace{5mm}
\begin{tabular}{llccccc} %
\toprule
          \multicolumn{2}{c} {NB of processes  }   & 4 & 5 & 6 & 7 & 8\\
\midrule
	\IEF{}
& Time (s)  &   $0.3$ & $1.2$  & 5.6   & 23   &  91 \\ %
          & Mem.\ (MB)   & $4.4$ & $10 $ & 36  & 138 & 510    \\
 & Result & \multicolumn{5}{c} {  $A - B \geq 1 $} \\

\midrule
	\EF{}
& Time (s)   & $0.3 $ & 1.1 & 5.2    &  20  & 88.5    \\ %
          & Mem.\ (MB)  &  $3.2$  & $10 $ & 37   & 138 & 516    \\

  \RIEF{}
  & Time (s)    & $0.3 $ & 1.5 & 6.5   & 30    &   118 \\
          & Mem.\ (MB)    & $4.4$ & $13$ & 47   & 177   &  667 \\
  & Result & \multicolumn{5}{c} {  $A - B > 0 $} \\

\bottomrule
\end{tabular}
\label{tab:fisher-comparison-one}
\end{table*}
\end{center}

In order to compare \AF{}, \IAF{} and \RIAF{}, we now consider a Fischer protocol model in which the processes only make a single request to the critical section.
The property to be checked is that for all executions, all processes obtain the critical section in mutual exclusion.
In addition, we change the time interval of the retry from $[A,\infty)$ to $[3 \times A,2]$, which has the effect of producing the constraints
$A \in (0, \frac{2}{3}] $ and
$B \in [0, \frac{2}{3}) $ and
$A - B > 0 $ without integer solution.

\begin{center}
\begin{table*}
\caption{Results for Fischer protocol with \IAF{}, \AF{}, \RIAF{}}
\centering
\setlength{\tabcolsep}{1.2em}
\hspace{5mm}
\begin{tabular}{llccccc} %
\toprule
          \multicolumn{2}{c} {NB of processes  }   & 4 & 5 & 6 & 7 & 8\\
\midrule
	\IAF{}
& Time (s)  &   $0$ & $0.1$  & 0.1   & 0.2   &  0.4 \\ %
          & Mem.\ (MB)   & 0.4  &$0.9$   & 1  & 1.8 & 4.4    \\

& Result & \multicolumn{5}{c} {(empty  set)} \\

\midrule
	\AF{}
& Time (s)   & $0.5 $ & 4.3 & 32    &  238  & 1757    \\ %
          & Mem.\ (MB)  &  $5$  & $29 $ & 181   & 1162 & 6972     \\ %

  \RIAF{}
  & Time (s)    & $0.5 $ & 3 & 21   & 141    &   975 \\
          & Mem.\ (MB)    & $4$ & $19$ & 94   & 503   &  2756 \\
  & Result & \multicolumn{5}{c} {$A \in (0, \frac{2}{3}]$ and $A - B > 0 $} \\

\bottomrule
\end{tabular}
\label{tab:fisher-comparison-two}
\end{table*}
\end{center}

For this example, \IAF{} produces an empty set (no integer solution), while \RIAF{} and \AF{} obtain the same result (\ie{} $A \in (0, \frac{2}{3}]$ and $A - B > 0 $), but \RIAF{} is much more efficient.

\subsection{Level crossing benchmark}\label{ss:levelcrossing}

As for \AF{}, it is easy to show, as in \cref{figure:exempleQuiMarchePasEnInteger} for \EF{}, that there may be no integer valuations of the parameters (\IEF{} returns an empty set) but only non-integer solutions.
In the next example, we will exploit this possibility and compare the algorithms when there is no integer solution, and when there is at least one integer solution.

Let us therefore consider as a third benchmark the classical level crossing example.
We use the version of this example proposed in~\cite{BV03}.
The system is obtained by the parallel composition of $n$~train models synchronized with the gate controller model (actions $\ensuremath{app}$, $\ensuremath{exit}$, with $n$~fixed)
and a gate model (actions $\ensuremath{down}$, $\ensuremath{up}$).

We use a parameter $\paramatext$ for the approach time of trains. This time is either $2\paramatext$ or is in the interval $[\paramatext-1,\paramatext+1]$ with $\paramatext \leq 10$.
We use two parameters $\parambtext$ and $\paramctext$ for the dynamic of opening and closing of the gate.
With EF formulae, we ask for the parameter valuations that ensure that the system does not reach a state such that a train is in front of an open gate.

We consider two cases: first, in case $i$, we choose the  parameters $\parambtext$ and $\paramctext$ such that the parameter valuations to never have a train in front of an open gate are all non-integers.
For this we use a lowering time in $[1, 5 \paramctext]$, a delay to start the opening in $[3 \parambtext,5]$ and an opening time in $[\paramctext,\parambtext]$.
Then, in case $ii$ we modify the lowering time and the delay to start respectively into $[1,\paramctext]$ and $[2\parambtext,5]$   for which there are also integer solutions.

We will consider this example for $n \in \{2, 3, 4\}$~trains.
Adding trains increases the explosion of the state space but does not change the parameter set result.

For the case~$i$, Algorithm \IEF{} produces ``$\emptyset$'' as a result (\ie{} the parameter valuations set is empty) and the result by both \EF{} and \RIEF{} is: $\paramatext \in (\frac{36}{5} , 10] \land
\parambtext \in [\frac{1}{5}, \frac{5}{3}] \land
\paramctext \in [\frac{1}{5}, \frac{2}{3}) \land
 \parambtext - \paramctext \geq 0 \land
 \parambtext - \paramatext + 5 \times \paramctext  < -6$.

The case~$ii$ enlarges the space of solutions allowing some integer valuations.

The result of both \EF{} and \RIEF{} is
$\paramatext \in (8, 10]
\land
\parambtext \in [1, \frac{5}{2}]
\land
\paramctext \in [1, 2)
\land
\parambtext - \paramctext \geq 0
\land
\parambtext -\paramatext + \paramctext  < -6$
while the result of \IEF{} is:
$\paramatext \in [9, 10]
\land
\parambtext = [1, 2]
\land
\paramctext \in [1, 1]
\land
\paramatext - \parambtext  \geq 8 $.

The computation times and memory space used are given in \cref{table:xp}, where ``\TimeOut{}'' denotes no termination after 3~hours.
The results show that \RIEF{} is more precise than \IEF{} and in some cases even provides valuations while \IEF{} produces an empty set---which is a main advantage.\label{newtext:table1}
Furthermore \cref{table:xp} shows that \RIEF{} is most of the time faster than \EF{} and uses less memory.
Although the parameters are bounded in this example, we stopped \EF{} (for 4~trains) after 3~hours of computation without knowing whether the algorithm would eventually terminate or not.
This is not a surprise since \EF{} has no guarantee of termination, whereas \RIEF{}, \IEF{}, \RIAF{} and \IAF{} always terminate for bounded parameters.

\begin{center}
\begin{table*}
\caption{Results for level crossing}
\centering
\setlength{\tabcolsep}{1.1em}
\hspace{5mm}
\begin{tabular}{@{} llccccccc} %
\toprule
 &  & \multicolumn{3}{l}{ i) without integer solution}               &~&\multicolumn{3}{l}{ii) with integer solution}\\
          & Trains    & 2   & 3   & 4  &       & 2    & 3   & 4  \\
\midrule
	\IEF{}
& Time (s)  & $<1$ & $<1$   & $<1$    & & $<1$  & 7   & 948      \\ %
          & Mem.\ (MB) & $<1$ & $<1$ & $<1$   & & $<1$ & 77  & 3735     \\
& Result & \multicolumn{3}{c} {(empty  set)} & &
\multicolumn{3}{c}  {\tiny \begin{tabular}{l}
$a \in [9, 10] $,
$b \in [1, 2] $, 
$c = 1 $ \\
$a-b \geq 8 $
\end{tabular} }   \\
\midrule
	\EF{}
& Time (s)  & 2 & 19     & \TimeOut{}  & & 1 & 11    & \TimeOut{}       \\
          & Mem.\ (MB) & $<1$ & 203   & \TimeOut{}& &  $<1$ & 130   & \TimeOut{}     \\
  \RIEF{}
  & Time (s)  & $<1$ & 4   & 7  &  & 1 & 14   & 2920       \\
          & Mem.\ (MB) & $<1$ & 63  & 168    &  & $<1$ & 269   & 13706     \\
          & Result &   
          \multicolumn{3}{c}
           {\tiny \begin{tabular}{l}
$a \in (\frac{36}{5}, 10] $, \\
$b \in [\frac{1}{5}, \frac{5}{3}] $, \\
$c \in [\frac{1}{5}, \frac{2}{3}) $, \\
$b -c \geq 0$, \\ 
$b -a + 5 \times c  < -6$
\end{tabular}  }
 & &
 \multicolumn{3}{c}   {\tiny \begin{tabular}{l}
$a \in (8, 10]  $,
$b \in [1, \frac{5}{2}]  $, \\
$c \in [1, 2) $,
$ b -c \geq 0 $ \\
$b -a + c  < -6$  
\end{tabular}  } \\
\bottomrule
\end{tabular}
\label{table:xp}
\end{table*}
\end{center}

\section{Implementation and experiments using \imitator}\label{section:implementation-imitator}

Regarding trace preservation, we implemented the rational-valued algorithm (\TPS) and its integer-complete counterpart (\RITPS) in the \imitator{} model checker~\cite{Andre21}.
As in \romeo{}, polyhedra operations are handled by PPL~\cite{BHZ08}.

Note that, from \cref{theorem:RITPS:soundness}, \RITPS{} cannot improve the result of~\TPS{}---but can potentially ensure termination where~\TPS{} would not terminate.

Experiments were conducted on a Dell Precision 5570 with an Intel\textregistered{} Core\texttrademark{} i7-12700H with 32\,GiB memory running Linux Mint 21 Vanessa.
We used \imitator{} \href{https://github.com/imitator-model-checker/imitator/releases}{3.4-alpha2} ``Cheese Durian''.
Binary, models and expected results together with a reproducibility script are available on the long-term archiving platform Zenodo: \url{https://doi.org/10.5281/zenodo.13779414}.

\paragraph{Gaining termination}
First, we apply \TPS{} and \RITPS{} to the example in \cref{figure:RITPS:termine}.
Applying \TPS{} leads to an infinite loop in the algorithm due to the diverging parameter constraints, and therefore (as expected) \TPS{} does not terminate.
In contrast, \RITPS{} terminates in 0.011~second, with the expected result:

\[0 \leq \paramitext{1} \leq \paramitext{2} \leq 5\text{.}\]

While this is not guaranteed by \cref{theorem:RITPS:soundness} (which only ensures integer completeness), the implementation of \RITPS{} in \imitator{} synthesizes for this example not only all integer parameter valuations, but also all rational valuations.
In other words, the synthesised result is the maximal result.

\paragraph{Scalability test}
We also conducted a scalability test to evaluate the overhead induced by the computation of the integer hull when it brings actually no gain at all (\ie{} when comparing the integer hull of the states, instead of the actual states, does not bring any gain in terms of size of the state space, and therefore neither in terms of termination).
We use as benchmark for comparison a model of the Carrier Sense Multiple Access/Collision Detection (CSMA/CD) protocol~\cite{KNSW07} belonging to the \imitator{} benchmarks library~\cite{AMP21}.
This protocol is a network protocol used in Ethernet to manage data transmission by detecting collisions on a shared communication channel.
The idea is as follows: devices first check whether the channel is free, then transmit and, if a collision occurs, they stop, wait for a random time (bounded by an exponential bound), and try again.
The model is made of the parallel composition of three automata, and has three parameters:
$\paramLambda$ (which encodes the length of a message),
$\paramSigma$ (which encodes the propagation time of a message),
and
$\paramTS$ (which encodes the duration of a time slot). %
The models are parametrised by the value of the exponential backoff (``$bc$''); the size of the model grows exponentially with~$bc$.

We fix as reference valuation $\pval$ such that $\pval(\paramLambda) = 808$, $\pval(\paramSigma) = 26$ and $\pval(\paramTS) = 52$.
When calling \TPS{} or \RITPS{} with this parameter valuation, the number of visited states is the same in both algorithms for each value of~$bc$, and the result is exactly the same for any value of~$bc$:

\[ 15 \times \paramTS < \paramLambda < 16 \times \paramTS \land 0 < \paramSigma < \paramTS \text{.}  \]

We give the results in \cref{table:experiments:CSMACD}, where ``\TimeOut{}'' denotes no termination after 5~minutes.
We give from left to right the exponential backoff (\ie{} the scalability factor of this benchmark), the numbers of automata, of clocks, of parameters of the model, the number of locations for all three automata, the number of symbolic states explored (identical for $\TPS$ and $\RITPS$); and the computation time for \TPS{} and \RITPS{} respectively.

\begin{table}
	\caption{Scalability test for \RITPS{} when $\IH$ does not bring state space reduction}
	\begin{tabular}{c c c c r r r r}
	\toprule
	$bc$ & $|\A|$ & $|\Clock|$ & $|\Param|$ & \multicolumn{1}{c}{$|\Loc|$} & \multicolumn{1}{c}{$|\SymbStates|$} & $\TPS$ (s) & $\RITPS$ (s) \\

	\midrule

	1 & 3 & 3 & 3 & $3 \times 8 \times 8$ & \num{511} & \num{0.19} & \num{0.74} \\

	2 & 3 & 3 & 3 & $3 \times 17 \times 17$ & \num{979} & \num{0.38} & \num{1.50} \\

	3 & 3 & 3 & 3 & $3 \times 34 \times 34$ & \num{2249} & \num{0.74} & \num{2.97} \\

	4 & 3 & 3 & 3 & $3 \times 67 \times 67$ & \num{6017} & \num{1.88} & \num{7.70} \\

	5 & 3 & 3 & 3 & $3 \times 132 \times 132$ & \num{16221} & \num{4.95} & \num{20.69} \\

	6 & 3 & 3 & 3 & $3 \times 261 \times 261$ & \num{37177} & \num{12.28} & \num{47.48} \\

	7 & 3 & 3 & 3 & $3 \times 518 \times 518$ & \num{79637} & \num{28.18} & \num{107.35} \\

	8 & 3 & 3 & 3 & $3 \times 1031 \times 1031$ & \num{165105} & \num{72.72} & \TimeOut{} \\ %

	\bottomrule

	\end{tabular}
	\label{table:experiments:CSMACD}
\end{table}

From \cref{table:experiments:CSMACD}, one can infer that the overhead remains acceptable, with a factor almost constant and close to~4 \wrt{}the execution time.
While this is not negligible, we believe this is an acceptable price to pay in order to guarantee termination.

\section{Conclusion}
\label{section:conclusion}

\subsection{Summary}
We introduced here an extrapolation for symbolic states in parametric timed automata (PTAs) that contains not only clocks but also parameters.
We then proposed three algorithms that always terminate for PTAs with bounded (but dense) parameter domains, and output symbolic sets of parameter valuations that define dense sets of parameter valuations that are guaranteed to be correct and containing at least all integer points.

Synthesizing not only the integer points but also the rational-valued points is of utmost importance for the robustness or implementability of the system.
In fact, one can even consider any degree of precision instead of integers (\eg{} a degree of precision of $\frac{1}{10}$) by appropriately resizing the constants of the PTA (\eg{} by multiplying all constants and all parameter bounds by 10).
This makes possible the synthesis of an underapproximated result arbitrarily close to the actual solution.

Our experiments using \romeo{} and \imitator{} show that our algorithms ensure termination for case studies that were otherwise unsolvable using these model checking engines, and incur an acceptable overhead otherwise.

\subsection{Future works}
We proposed a first $\LargestC$-extrapolation for PTAs; this can serve as a basis for further developments, \eg{} using better extrapolation operators such as LU, local-LU or local-diagonal-LU abstractions, following the recent line of works on extrapolations and abstractions over timed automata zones (\eg{} \cite{HSW13,HSW16,BGHSS22}).
The approach we propose is fairly generic and could probably be adapted to more complex properties, expressed in LTL or CTL and their parametric variants.

Furthermore, investigating the possible combination of the integer hull with recent heuristics specifically designed for parametric timed formalisms, such as state merging over symbolic states (\eg{} \cite{AMPP22}) or more complex versions of our parametric timed extrapolation~\cite{AA24} (proposed after~\cite{ALR15}), is on our agenda.

Finally, we use here the integer hull as an underapproximation of the result;
in contrast, we could use an overapproximation using a notion (yet to be defined) of ``external integer hull'', and then combine both hulls to obtain two sets of ``good'' and ``bad'' parameter valuations separated by an arbitrarily small set of unknown valuations.

\section*{Acknowledgment}
\noindent{}
We would like to thank the three anonymous reviewers for useful comments.
We would like to thank an anonymous reviewer for a meaningful remark on a preliminary version of this paper together with the suggestion of the example in \cref{figure:example:reviewerACSD}.
We finally thank Johan Arcile for useful remarks on the manuscript.

	\newcommand{\CCIS}{Communications in Computer and Information Science}
	\newcommand{\ENTCS}{Electronic Notes in Theoretical Computer Science}
	\newcommand{\FAC}{Formal Aspects of Computing}
	\newcommand{\FundInf}{Fundamenta Informaticae}
	\newcommand{\FMSD}{Formal Methods in System Design}
	\newcommand{\IJFCS}{International Journal of Foundations of Computer Science}
	\newcommand{\IJSSE}{International Journal of Secure Software Engineering}
	\newcommand{\IPL}{Information Processing Letters}
	\newcommand{\JAIR}{Journal of Artificial Intelligence Research}
	\newcommand{\JLAP}{Journal of Logic and Algebraic Programming}
	\newcommand{\JLAMP}{Journal of Logical and Algebraic Methods in Programming} %
	\newcommand{\JLC}{Journal of Logic and Computation}
	\newcommand{\LMCS}{Logical Methods in Computer Science}
	\newcommand{\LNCS}{Lecture Notes in Computer Science}
	\newcommand{\RESS}{Reliability Engineering \& System Safety}
	\newcommand{\RTS}{Real-Time Systems}
	\newcommand{\SCP}{Science of Computer Programming}
	\newcommand{\SOSYM}{Software and Systems Modeling ({SoSyM})}
	\newcommand{\STTT}{International Journal on Software Tools for Technology Transfer}
	\newcommand{\TCS}{Theoretical Computer Science}
	\newcommand{\TOPLAS}{{ACM} Transactions on Programming Languages and Systems ({ToPLAS})}
	\newcommand{\ToPNoC}{Transactions on {P}etri Nets and Other Models of Concurrency}
	\newcommand{\TOSEM}{{ACM} Transactions on Software Engineering and Methodology ({ToSEM})}
	\newcommand{\TSE}{{IEEE} Transactions on Software Engineering}
\bibliographystyle{alphaurl}
\bibliography{psynthesis}

\end{document}